\definecolor{green}{rgb}{0,0.5,0} %
\newtheorem{theorem}[equation]{Theorem}
\newtheorem{lemma}[equation]{Lemma}
\newtheorem{proposition}[equation]{Proposition}
\newtheorem{corollary}[equation]{Corollary}
\newtheorem{warning}[equation]{Warning}
\theoremstyle{definition}
\newtheorem{definition}[equation]{Definition}
\numberwithin{equation}{section}
\let\c@algorithm\relax
\newaliascnt{algorithm}{equation}
\DeclareMathOperator*{\E}{\mathbb E}
\newcommand{\R}{\mathbb R}
\newcommand{\ubar}[1]{\underaccent{\bar}{#1}}
\let\op\operatorname
\let\eps\varepsilon
\let\mc\mathcal
\setlist[enumerate,1]{label={(\arabic*)}}
\title{Finding and Certifying (Near-)Optimal Strategies \\ in Black-Box Extensive-Form Games}
\author{

    Brian Hu Zhang,\textsuperscript{\rm 1}
    Tuomas Sandholm\textsuperscript{\rm 1,2,3,4}
    \\
}
\begin{document}

\maketitle

\begin{abstract}
Often---for example in war games, strategy video games, and financial simulations---the game is given to us only as a black-box simulator in which we can play it. In these settings, since the game may have unknown nature action distributions (from which we can only obtain samples) and/or be too large to expand fully, it can be difficult to compute strategies with guarantees on exploitability. Recent work \cite{Zhang20:Small} resulted in a notion of certificate for extensive-form games that allows exploitability guarantees while not expanding the full game tree. However, that work assumed that the black box could sample or expand arbitrary nodes of the game tree at any time, and that a series of exact game solves (via, for example, linear programming) can be conducted to compute the certificate. Each of those two assumptions severely restricts the practical applicability of that method. In this work, we relax both of the assumptions. We show that high-probability certificates can be obtained with a black box that can do nothing more than play through games, using only a regret minimizer as a subroutine. As a bonus, we obtain an equilibrium-finding algorithm with $\tilde O(1/\sqrt{T})$ convergence rate in the extensive-form game setting that does not rely on a sampling strategy with lower-bounded reach probabilities (which MCCFR assumes). We demonstrate experimentally that, in the black-box setting, our methods are able to provide nontrivial exploitability guarantees while expanding only a small fraction of the game tree.
\end{abstract}

\section{Introduction}
Computational equilibrium finding has led to many recent breakthroughs in AI in games such as poker \cite{Bowling15:Heads,Brown17:Superhuman,Moravvcik17:DeepStack,Brown19:Superhuman} where the game is fully known. However, in many applications, the game is not fully known; instead, it is given only via a simulator that permits an algorithm to play through the game repeatedly (e.g., \citealt{Wellman06:Methods,Lanctot17:Unified,Tuyls18:Generalised,Areyan20:Improved}). 
The algorithm may never know the game exactly. While deep reinforcement learning has yielded strong practical results in this setting~\citep{Vinyals19:Grandmaster,Berner19:Dota}, those methods lack the low-exploitability guarantees of game-theoretic techniques, even with infinite samples and computation. Furthermore, the standard method of evaluating exploitability of a strategy---computing the equilibrium gap of the strategy---is to compute a best response for each player. This, however, assumes the whole game to be known exactly.

Recently, \citet{Zhang20:Small} defined a notion of {\em certificate} for imperfect-information extensive-form games that can address these problems.\footnote{\citet{Gatti11:Equilibrium} explore the problem in extensive-form games of {\em perfect} information with infinite strategy spaces.} A certificate enables verification of the exploitability of a given strategy without exploring the whole game tree. However, that work has a few limitations that reduce its practical applicability. First, they assume a black-box model that allows sampling or expanding arbitrary nodes in the game tree. Yet most simulators only allow the players to start from the root of the game, and chance nodes in the simulator affect the path of play, so exploration by jumping around in the game tree is not supported. Second, their algorithm requires an exact game solver, for example, a \textit{linear program (LP)} solver, to be invoked repeatedly as a subroutine. This reduces the ability of the algorithm to scale to cases in which LP is impractical due to run time or memory considerations. 

In this paper, we address both of these concerns. We give algorithms that create certificates in extensive-form games in a simple black-box model, with either an exact game solver or a regret minimizer as a subroutine. We show that our algorithms achieve convergence rate $O(\sqrt{\log(T)/T})$ (hiding game-dependent constants). This matches, up to a logarithmic factor, the convergence rate of regret minimizers such as \textit{counterfactual regret minimization (CFR)}~\cite{Zinkevich07:Regret,Brown19:Solving} or its stochastic variant, \textit{Monte Carlo CFR (MCCFR)}~\cite{Lanctot09:Monte,Farina20:Stochastic}---while also providing verifiable equilibrium gap guarantees unlike those prior techniques suitable for the black-box setting. In particular, we are also able to compute ex-post exploitability bounds without knowing the exact game or iterating through all its sequences. In contrast, previous techniques would need to rely on either their worst-case convergence bounds, which are at least linear (and usually worse) in the number of information sets~\cite{Lanctot09:Monte}, or else know the exact game to perform an exact best-response computation. We prove that this convergence rate is optimal for the setting. We demonstrate experimentally that our method allows us to construct nontrivial certificates in games with good sample efficiency, namely, while taking fewer samples than there are nodes in the game.

As a side effect, our algorithm is, to our knowledge, the first extensive-form game-solving algorithm that enjoys $\tilde O(N/\sqrt{T})$ (where $N$ is the number of nodes) Nash gap convergence rate in two-player zero-sum games in the model-free setting, without the problematic assumption of having an \textit{a priori} strategy with known lower-bounded reach probabilities on all nodes that is required by MCCFR.  \citet{Farina21:Bandit} only achieves $\op{poly}(N)/\sqrt{T}$ convergence against an {\em arbitrary, fixed} strategy (which is not an exploitability bound), and \citet{Farina21:Model} has weaker convergence rate $\op{poly}(N)/T^{1/4}$. 
Unlike in the latter two papers, which are regret minimizers, though, we are controlling both players during the learning.

Our techniques also work for games where payoffs can be received at internal nodes (not just at leaves), and for coarse-correlated equilibrium in general-sum multi-player games.

{\em Game abstraction} has commonly been used to reduce game tree size prior to solving~\cite{Billings03:Approximating,Gilpin06:Competitive,Brown15:Simultaneous,Cermak17:An}. Practical abstraction techniques without exploitability guarantees were used in achieving superhuman performance in no-limit Texas hold'em poker in the {\em Libratus}~\citep{Brown17:Superhuman} and {\em Pluribus}~\citep{Brown19:Superhuman} agents. 
There has been research on abstraction algorithms with exploitability guarantees for specific settings~\cite{Sandholm12:Lossy,Basilico11:Automated} and for general extensive-form games (e.g.,~\citealt{Gilpin07:Lossless,Lanctot12:No,Kroer14:Extensive,Kroer15:Discretization,Kroer16:Imperfect,Kroer18:Unified}), but these are not scalable for large games such as no-limit Texas hold'em, and the guarantees depend on the difference between the abstracted game and the real game being {\em known}.

\section{Notation and Background}

We study {\em extensive-form games}, hereafter simply {\em games}. An extensive-form game consists of:
\begin{enumerate}
\item a set of players $\mc P$, usually identified with positive integers $1, 2, \dots n$. {\em Nature}, a.k.a. {\em chance}, will be referred to as player 0. For a given player $i$, we will often use $-i$ to denote all players except $i$ and nature.
\item a finite tree $H$ of {\em nodes}, rooted at some {\em root node} $\emptyset$. The edges connecting a node $h$ to its children are labeled with {\em actions}. The set of actions at $h$ will be denoted $A(h)$. $h \preceq z$ means $z$ is a descendant of $h$, or $z = h$.
\item a map $P : H \to \mc P \cup \{0\}$, where $P(h)$ is the player who acts at node $h$ (possibly nature).
\item for each player $i$, a {\em utility function} $u_i : H \to \R$. It will be useful for us to allow players to gain utility at {\em internal} nodes of the game tree. Along any path $(h_1, h_2, \dots, h_k)$, define $u(h_1 \to h_k) = \sum_{i=1}^k u(h_i)$ to be the total utility gained along that path, including both endpoints. The goal of each player is to maximize their total reward $u(\emptyset \to z)$, where $z$ is the terminal node that is reached.
\item for each player $i$, a partition of player $i$'s decision points, i.e., $P^{-1}(i)$, into \textit{information sets} (or {\em infosets}). In each infoset $I$, every $h \in I$ must have the same set of actions.
\item for each node $h$ at which nature acts, a distribution $\sigma_0(\cdot |h)$ over the actions available to nature at node $h$.
\end{enumerate}

We will use $(G, u)$, or simply $G$ when the utility function is clear, to denote a game. $G$ contains the tree and information set structure, and $u = (u_1, \dots, u_n)$ is the profile of utility functions. 

For any history $h \in H$ and any player $i \in \mc P$, the {\em sequence} $s_i(h)$ of player $i$ at node $h$ is the sequence of information sets observed and actions taken by $i$ on the path from $\emptyset$ to $h$. In this paper, all games will be assumed to have {\em perfect recall}: if $h_1, h_2 \in I$ and $i$ acts at $I$, then $s_i(h_1) = s_i(h_2)$. 

A {\em behavior strategy} (hereafter simply {\em strategy}) $\sigma_i$ for player $i$ is, for each information set $I$ at which player $i$ acts, a distribution $\sigma_i(\cdot | I)$ over the actions available at that infoset. When an agent reaches information set $I$, it chooses action $a$ with probability $\sigma_i(a | I)$. A tuple $\sigma = (\sigma_1, \dots, \sigma_n)$ of behavior strategies, one for each player $i \in \mc P$, is a {\em strategy profile}. A distribution over strategy profiles is called a {\em correlated strategy profile}, and will also be denoted $\sigma$. The {\em reach probability} $\sigma_i(h)$ is the probability that node $h$ will be reached, assuming that player $i$ plays according to strategy $\sigma_i$, and all other players (including nature) always choose actions leading to $h$ when possible. Analogously, we define $\sigma(h) = \prod_{i \in \mc P \cup \{0\}} \sigma_i(h)$ to be the probability that $h$ is reached under strategy profile $\sigma$. This definition naturally extends to sets of nodes or to sequences by summing the reach probabilities of all relevant nodes. %

Let $S_i$ be the set of sequences for player $i$. The {\em sequence form} of a strategy $\sigma_i$ is the vector $x \in \R^{S_i}$ given by $x[s] = \sigma_i(s)$. The set of all sequence-form strategies is the {\em sequence form strategy space} for $i$, and is a convex polytope~\cite{Koller94:Fast}.

The {\em value} of a profile $\sigma$ for player $i$ is $u_i(\sigma):= \E_{z \sim \sigma} u_i(\emptyset \to z)$. The future utility of a profile starting at $h$ is $u(\sigma|h)$; that is, 
$
u(\sigma|h) = \E_{z \sim \sigma|h} u(h \to z).
$ 

The {\em best response value} $u^*_i(\sigma_{-i})$ for player $i$ against an opponent strategy $\sigma_{-i}$ is the largest achievable value; i.e., $u^*_i(\sigma_{-i}) = \max_{\sigma_i} u_i(\sigma_i, \sigma_{-i})$. A strategy $\sigma_i$ is an $\eps$-{\em  best response} to opponent strategy $\sigma_{-i}$ if $u_i(\sigma_i, \sigma_{-i}) \ge u^*_i(\sigma_{-i}) - \eps$. A {\em best response} is a $0$-best response.

A strategy profile $\sigma$ is an {\em $\eps$-Nash equilibrium} (which we will call {\em $\eps$-equilibrium} for short) if all players are playing $\eps$-best responses. A {\em Nash equilibrium} is a $0$-Nash equilibrium.

We also study finding certifiably good strategies for the game-theoretic solution concept called \textit{coarse-correlated equilibrium}. In such equilibrium, if $\sigma$ is correlated, the deviations $\sigma_i$ when computing best response are \textit{not} allowed to depend on the shared randomness. A correlated strategy profile $\sigma$ is a {\em coarse-correlated $\eps$-equilibrium} if all players are playing $\eps$-best responses under this restriction.

\subsection{$\eps$-Equilibria within Pseudogames}
We now define {\em pseudogames}, first introduced by \citet{Zhang20:Small}.
\begin{definition}
	A {\em pseudogame} $(\tilde G, \alpha, \beta)$ is a game in which some nodes do not have specified utility but rather have only lower and upper bounds on utilities. Formally, for each player $i$, instead of the standard utility function $u_i$, there are lower and upper bound functions $\alpha_i, \beta_i : H \to \R$.
\end{definition}
We will always use $\Delta$ to mean $\beta - \alpha$. For $\alpha$ and $\beta$, we will use the same notation overloading as we do for the utility function $u$. For example, $\alpha(h \to z)$ and $\alpha(\sigma|h)$ have the corresponding meanings.

\begin{definition}\label{def:trunk}
	$(\tilde G, \alpha, \beta)$ is a {\em trunk} of a game $(G, u)$ if: %
	\begin{enumerate}
		\item $\tilde G$ can be created by collapsing some internal nodes of $G$ into terminal nodes (and removing them from information sets they are contained in), and
		\item for all nodes $h$ of $G$, all players $i$, and all strategy profiles $\sigma$, we have $\alpha_i(\sigma|h) \le u_i(\sigma|h) \le \beta_i(\sigma|h)$.
	\end{enumerate}  
\end{definition}
It is possible for information sets to be partially or totally removed in a trunk game. 
Next we state the basics of equilibrium and coarse-correlated equilibrium in pseudogames.
\begin{definition}\label{def:pseudonash}
	A {\em (coarse-correlated) $\eps$-equilibrium} of 
	$(\tilde G, \alpha, \beta)$ is a (correlated) profile $\sigma$ such that  the {\em equilibrium gap} $ \beta_i^*(\sigma_{-i}) - \alpha_i(\sigma) $ of each player $i$ is at most $\eps$.
\end{definition}
\begin{definition}
	A {\em (coarse-correlated) $\eps$-certificate} for a game $G$ is a pair $(\tilde G, \sigma)$, where $\tilde G$ is a trunk of $G$ and $\sigma$ is a (coarse-correlated) $\eps$-equilibrium of $\tilde G$.
\end{definition}
\begin{proposition}[\citealt{Zhang20:Small}]\label{prop:reasonable}
	Let $(\tilde G, \sigma)$ be an $\eps$-certificate for game $G$. Then any strategy profile in $G$ created by playing according to $\sigma$ in any information set appearing in $\tilde G$ and arbitrarily at information sets not appearing in $\tilde G$ is a $\eps$-equilibrium in $G$.
\end{proposition}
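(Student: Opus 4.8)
The plan is to fix a player $i$, write $\hat\sigma$ for an arbitrary extension of $\sigma$ to $G$ of the prescribed form (play $\sigma$ wherever the infoset survives in $\tilde G$, and anything at all elsewhere), and prove the pair of bounds
\[
\alpha_i(\sigma) \;\le\; u_i(\hat\sigma)
\qquad\text{and}\qquad
u_i^*(\hat\sigma_{-i}) \;\le\; \beta_i^*(\sigma_{-i}).
\]
Granting these, Definition~\ref{def:pseudonash} applied to the $\eps$-equilibrium $\sigma$ of $\tilde G$ gives $u_i^*(\hat\sigma_{-i}) - u_i(\hat\sigma) \le \beta_i^*(\sigma_{-i}) - \alpha_i(\sigma) \le \eps$, and since $i$ is arbitrary this is exactly the statement that $\hat\sigma$ is an $\eps$-equilibrium of $G$.

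The engine behind both bounds is one bookkeeping identity. Every leaf $z'$ of $G$ that $\hat\sigma$ can reach first crosses the ``collapse frontier'' at a unique leaf $z$ of $\tilde G$ (with $z=z'$ when $z'$ already survives in $\tilde G$), so I regroup the expectation defining $u_i(\hat\sigma)$ by which frontier leaf $z$ is reached. Since $\hat\sigma$ and $\sigma$ prescribe identical behavior at every infoset (and nature node) strictly above the frontier, the probability that $\hat\sigma$ reaches such a $z$ in $G$ equals the reach probability $\sigma(z)$ in $\tilde G$, and the payoff accumulated strictly above $z$ is the same number in both games. Thus $u_i(\hat\sigma)$ and $\alpha_i(\sigma)$ decompose over the very same distribution over frontier leaves, differing only in that the former scores each $z$ by the true continuation value $u_i(\hat\sigma\mid z)$ and the latter by the pseudogame's lower bound at $z$. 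Clause (2) of Definition~\ref{def:trunk}, which at each collapsed leaf $z$ sandwiches $u_i(\hat\sigma\mid z)$ between that lower bound and the corresponding upper bound, then yields after summation $\alpha_i(\sigma)\le u_i(\hat\sigma)\le\beta_i(\sigma)$; the left inequality is the first bound above.

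For the right-hand bound I run the same decomposition against an arbitrary deviation $\tau_i$ of player $i$ in $G$. Its restriction $\tau_i'$ to the infosets of $i$ that appear in $\tilde G$ is a legal strategy in $\tilde G$, and because only infosets above the frontier affect whether a given $z$ is reached, the pairs $(\tau_i,\hat\sigma_{-i})$ in $G$ and $(\tau_i',\sigma_{-i})$ in $\tilde G$ induce the same distribution over frontier leaves; applying the upper half of clause (2) leaf by leaf gives $u_i(\tau_i,\hat\sigma_{-i}) \le \beta_i(\tau_i',\sigma_{-i}) \le \beta_i^*(\sigma_{-i})$, and maximizing the left side over $\tau_i$ proves $u_i^*(\hat\sigma_{-i}) \le \beta_i^*(\sigma_{-i})$. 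The coarse-correlated statement is identical, noting only that if $\tau_i$ may not depend on the shared randomness then neither does $\tau_i'$, so the bound obtained is the coarse-correlated best-response value of $\tilde G$ against $\sigma_{-i}$.

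I expect the genuinely delicate part to be making the ``collapse frontier'' and the restriction map from strategies of $G$ to strategies of $\tilde G$ precise: Definition~\ref{def:trunk} allows infosets to be only partially removed, so one must check carefully that a $G$-strategy restricts to a well-defined $\tilde G$-strategy and that, under perfect recall, reach probabilities and accumulated payoffs above the frontier agree across the two games. Once that correspondence is pinned down, the rest is clause (2) of Definition~\ref{def:trunk} read off leaf by leaf together with linearity of expectation.
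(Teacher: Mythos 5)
Your proposal is correct. Note, however, that the paper itself gives no proof of this proposition --- it is imported from \citet{Zhang20:Small} (the present paper only remarks that the same argument extends to coarse-correlated equilibria) --- so there is no in-paper proof to compare against; your argument is the natural one and matches the cited source in spirit. Two small observations. First, most of your work goes into re-deriving, via the collapse-frontier decomposition, the root-level sandwich $\alpha_i(\tau_i,\hat\sigma_{-i}) \le u_i(\tau_i,\hat\sigma_{-i}) \le \beta_i(\tau_i,\hat\sigma_{-i})$; but clause (2) of Definition~\ref{def:trunk} is quantified over \emph{all} nodes $h$, so applying it directly at $h=\emptyset$ already yields this for any full-game profile, and the only remaining step is that the restriction of a $G$-strategy to $\tilde G$ is well defined and that $\alpha_i,\beta_i$ evaluated at the root depend only on that restriction. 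Your frontier decomposition is essentially a proof of why the root-level instance follows from the leaf-level instances, which is harmless but more than is strictly needed. Second, the point you flag as delicate --- partially removed infosets --- is indeed the right thing to check, and it resolves exactly as you suggest: collapsing a node removes it (and its subtree) from its infoset but does not change the action set at any surviving infoset, so a $G$-strategy restricts to a legal $\tilde G$-strategy and reach probabilities of frontier leaves agree because the entire root-to-frontier path lies in the surviving subtree where the two profiles and nature coincide. With those two points pinned down, your chain $u_i^*(\hat\sigma_{-i}) - u_i(\hat\sigma) \le \beta_i^*(\sigma_{-i}) - \alpha_i(\sigma) \le \eps$ is exactly right, as is the coarse-correlated variant.
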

Though the above proposition was stated only for Nash equilibrium by~\citet{Zhang20:Small}, we observe that it applies to coarse-correlated equilibria as well.

\subsection{The Zero-Sum Case}
A two-player game is {\em zero sum} if $u_1 = -u_2$. In this case, we refer to a single utility function $u$; it is understood that Player~2's utility function is $-u$. In zero-sum games, all equilibria have the same expected value; this is called the {\em value of the game}, and we denote it by $u^*$.
In the zero-sum case, we use a slightly different notion of $\eps$-equilibrium of a pseudogame, which will make the subsequent results tighter.
\begin{definition}
	A two-player pseudogame $(\tilde G, \alpha, \beta)$ is zero-sum if $\alpha_2 = -\beta_1$ and $\beta_2 = -\alpha_1$.
\end{definition}
As alluded to above, in this situation, we will drop the subscripts, and write $\alpha$ and $\beta$ to mean $\alpha_1$ and $\beta_1$. In particular, $(\tilde G, \alpha)$ and $(\tilde G, \beta)$ are zero-sum games. 
\begin{definition}\label{def:zerosumnash}
	An $\eps$-equilibrium of a two-player zero-sum pseudogame $(\tilde G, \alpha, \beta)$ is a profile $(x^*, y^*)$ for which the {\em Nash gap} $
	\beta^*(y^*) - \alpha^*(x^*)$ is at most $\eps$.
\end{definition}
In zero-sum games, we need not concern ourselves with correlation, since at least one player can always deviate to playing independently of the other player and not lose utility. In particular, a coarse-correlated $\eps$-equilibrium remains an $\eps$-equilibrium if the correlations are removed.

\subsection{Regret Minimizers}
{\em Online convex optimization} (OCO) \cite{Zinkevich03:Online} is a rich framework through which to understand decision-making in possibly adversarial environments. 
\begin{definition}
Let $X \subseteq \R^n$ be a compact, convex set. A {\em regret minimizer} $\mc A_X$ on $X$ is an algorithm that acts as follows. At each time $t = 1, 2, \dots T$, the algorithm $\mc A_X$ outputs a {\em decision} $x^t \in X$, and then receives a linear {\em loss} $\ell_t : X \to \R$, which may be generated adversarially.
\end{definition}
The goal is to minimize the {\em cumulative regret} 
\begin{align}
    R_T := \max_{x \in X} \sum_{t=1}^T \qty[\ell_t(x^t) - \ell_t(x)].
\end{align}
For example, CFR and its modern variants achieve $O(\sqrt{T})$ regret in sequence-form strategy spaces. 

The connection between OCO and equilibrium-finding in games is via the following observation. Let $(\sigma^1, \dots, \sigma^T)$ be any sequence of strategy profiles, and let $\bar \sigma$ be the correlated strategy profile that is uniform over $\sigma^1, \dots, \sigma^T$. Suppose that player $i$ generated her strategy at each timestep via a regret minimizer, and achieved regret $R_T$. Then, by definition of regret, $i$ is playing an $\eps$-best response to $\bar \sigma$, where $\eps = R_T/T$. Thus, in particular, if  all players are playing using a regret minimizer with sublinear regret, the average strategy profile $\bar \sigma$ converges to a coarse-correlated equilibrium.
\section{Black-Box Model and Problem Statement}\label{s:statement}
Let $(G, u)$ be an $n$-player game, which we assume to be given to us as a black box. Given a profile $\sigma$, the black box allows us to sample a playthrough from $G$ under $\sigma$. We also assume that, at every node $h$, the black box gives the actions and corresponding child nodes available at $h$, as well as correct (but not necessarily tight) bounds $[\alpha(h \to *), \beta(h \to *)]$ on the utility $u(h \to z)$ of every terminal descendant $z \succeq h$:
{\footnotesize
\begin{align}
    \alpha(h \to *) \le \min_{z \succeq h} u(h \to z) \le \max_{z \succeq h} u(h \to z) \le \beta(h \to *).
\end{align}
}%
Our goal in this paper is to develop equilibrium-finding algorithms that give {\em anytime, high-probability, instance-specific} exploitability guarantees that can be computed without expanding the rest of the game tree, and are better than the generic guarantees given by the worst-case runtime bounds of algorithms like MCCFR. More formally, our goal is, after $t$ playthroughs, to efficiently maintain a strategy profile $\sigma^t$ and bounds $\eps_{i,t}$ on the equilibrium gap of each player's strategy (or, in the zero-sum case, a single bound $\eps_t$ on the Nash gap) that are correct with probability $1 - 1/\op{poly}(t)$. 

\section{Lower Bound}
Before proceeding to algorithms, we prove a lower bound on the sample complexity of computing such a strategy profile. Let $\gamma > 0$ be an arbitrary constant. Consider a multi-armed bandit instance in which the left arm has some unknown reward distribution over $\{0, 1\}$, and the right arm always gives utility $1/2$. Let $p$ be the probability that the left arm gives $1$. We will consider the two games, $G_-$ and $G_+$, in which, respectively, the left arm gives $p = 1/2 - \eps$ and $p = 1/2 + \eps$, where $\eps = \Theta(\sqrt{\gamma \log(t)/t})$, and the $\Theta$ hides only an absolute constant. Suppose $t$ samples of the left arm are taken (the right arm would not be sampled by an optimal algorithm, since its value is already known). We will say that the algorithm has {\em selected the correct arm} if $\sigma^t$ assigns a higher probability to the better arm than it does to the worse arm. Then the following two facts are simultaneously true.
\begin{enumerate}
\item By binomial tail bounds, no algorithm can select the  correct arm with probability better than $1 - \Theta(1/t^\gamma)$.
\item In the event that an algorithm fails to select the correct arm at time $t$, its equilibrium gap is $\Theta(\eps)$. 
\end{enumerate}
Thus, we have the following theorem.\footnote{The dependence of the game on $t$ in the above argument is fine. To prove the theorem, we only need to give a game $G$ and time $t$ for which no algorithm can achieve $\eps_{i, t} = O(\sqrt{\log(t)/t})$ with sufficiently high probability.}
\begin{theorem}
    Any algorithm that provides the guarantees described in 
    \Cref{s:statement} must have $\eps_{i, t} = \Omega(\sqrt{\log(t)/t})$. 
\end{theorem}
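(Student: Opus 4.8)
The plan is to make the two informal facts in the excerpt precise and combine them via a union bound. Call an algorithm \emph{valid} if there is a constant $k$ such that, at each time $t$, with probability at least $1 - t^{-k}$ the reported $\eps_{i,t}$ is a genuine upper bound on the equilibrium gap of $\sigma^t$ in $G$ (this is exactly the requirement of \Cref{s:statement}; by \Cref{prop:reasonable} it is irrelevant whether the bound is argued through a trunk). Fix a large $t$ and let the black box be the depth-two game $G_\pm$: the root is a decision node with actions $L, R$; action $R$ leads to a leaf of utility $1/2$; action $L$ leads to a chance node whose two children have utilities $0$ and $1$, reached with probabilities $1-p$ and $p$. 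Set $p = 1/2 \pm \eps$ with $\eps = c\sqrt{\gamma \log t/t}$, where $c$ is a small absolute constant and $\gamma \in (0,k)$ is chosen at the end. All utility bounds $[\alpha(h\to*),\beta(h\to*)]$ may be taken to be $[0,1]$, so the trunk is the whole game and $\alpha = \beta = u$. Since the $R$-leaf's value is revealed by a single playthrough, an algorithm may as well spend all $t$ playthroughs on $L$, so its entire knowledge is a (randomized) function of $t$ i.i.d.\ $\mathrm{Bernoulli}(p)$ samples, and $\sigma^t$ reduces to the single number $q := \sigma^t(L)$.

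The first ingredient is indistinguishability: any procedure that, from $t$ samples of $\mathrm{Bernoulli}(p)$, outputs a guess of $\op{sign}(p - 1/2)$ must err with probability $\Omega(t^{-\gamma})$ on at least one of $G_-, G_+$. This is a two-point testing bound: with a uniform prior the minimax error equals $\tfrac12\big(1-\mathrm{TV}(\mathrm{Bern}(1/2-\eps)^{\otimes t},\mathrm{Bern}(1/2+\eps)^{\otimes t})\big)$, and combining $1-\mathrm{TV}(\mu,\nu)\ge\tfrac12\,\mathrm{BC}(\mu,\nu)^2$ with tensorization of the Bhattacharyya coefficient, $\mathrm{BC}(\mathrm{Bern}(1/2-\eps),\mathrm{Bern}(1/2+\eps)) = 2\sqrt{1/4-\eps^2}$, yields minimax error $\ge \tfrac14(1-4\eps^2)^t \ge \tfrac14 e^{-8t\eps^2} = \tfrac14 t^{-8c^2\gamma}$, which is $\ge \tfrac14 t^{-\gamma}$ once $c \le 1/(2\sqrt 2)$. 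Since ``selecting the correct arm'' (i.e.\ $q > 1/2$ in $G_+$, $q < 1/2$ in $G_-$) is one such guessing rule, on one of the two games the algorithm selects the wrong arm with probability at least $\Omega(t^{-\gamma})$; call that game $G_\star$ and that event $E_1$.

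The second ingredient is a one-line gap computation: writing $u_1(\sigma^t) = qp + (1-q)/2$ and best-response value $\max(p,1/2)$, the equilibrium gap of $\sigma^t$ equals $(1-q)\eps$ in $G_+$ and $q\eps$ in $G_-$; on $E_1$ the relevant coefficient is $\ge 1/2$, so the true gap of $\sigma^t$ is $\ge \eps/2 = \Theta(\sqrt{\gamma\log t/t})$. Now choose $\gamma = k/2$. On $G_\star$ we have $\Pr[E_1] \ge \Omega(t^{-k/2})$, while validity gives $\Pr[E_2] \ge 1 - t^{-k}$ for $E_2 := \{\eps_{i,t} \ge \text{true gap of }\sigma^t\}$. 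Hence $\Pr[E_1 \cap E_2] \ge \Omega(t^{-k/2}) - t^{-k} > 0$ for all large $t$, and on $E_1 \cap E_2$ we get $\eps_{i,t} \ge \eps/2 = \Omega(\sqrt{\log t/t})$ (the hidden constant depending only on $k$). Therefore no valid algorithm can guarantee $\eps_{i,t} = o(\sqrt{\log t/t})$, which is the theorem.

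The step I expect to be the real obstacle is the \emph{polynomial} failure rate $\Omega(t^{-\gamma})$ in the first ingredient: the KL divergence between the two sample distributions is $\Theta(\log t)$, so Pinsker's inequality is vacuous here, and one must instead go through the Bhattacharyya/Hellinger (or a direct likelihood-ratio) computation and then track constants carefully enough that the exponent can be pushed below the confidence exponent $k$ by taking $\eps = \Theta(\sqrt{\log t/t})$ with a suitable absolute constant. Everything else --- the reduction to a one-dimensional $\sigma^t$, the gap formula, and the final union bound --- is routine bookkeeping.
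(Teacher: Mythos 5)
Your proposal is correct and follows the same construction and argument as the paper: the same two-armed bandit instance $G_\pm$ with $p = 1/2 \pm \Theta(\sqrt{\gamma\log t/t})$, the same two facts (polynomial-probability indistinguishability plus a $\Theta(\eps)$ gap on the error event), combined by the same union bound against the confidence level. The only difference is that you substantiate the indistinguishability step via Bhattacharyya-coefficient tensorization where the paper simply invokes binomial tail bounds; this is a legitimate and careful way to fill in the one step the paper leaves informal.
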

We will now describe algorithms matching this bound. 

\section{Exploration and Confidence Sequences}
We now describe our main theoretical construction: a notion of \textit{confidence sequence} for games, that enables us to construct high-probability certificates from playthroughs. Let $\mc A$ be an exploration policy that generates, possibly adaptively, a profile $\sigma^t$ at each timestep $t$.
\begin{definition}\label{def:confidence-sequence}
A {\em confidence sequence} for a game $G$ is a sequence of pseudogames $(\hat G^t, \hat \alpha^t, \hat \beta^t)$ created by the following protocol. Start with $\hat G^0$ containing only a root node $\emptyset$ and trivial reward bounds (that is, $\hat \beta^0(\emptyset) = \beta(\emptyset \to *) $ and $\hat \alpha^0(\emptyset) = \alpha(\emptyset \to *) $). At each time $t$:
\begin{enumerate}
\item Query $\mc A$ to obtain profile $\sigma^t$
\item Play a single game of $G$ according to $\sigma^t$.
\item Create $\hat G^{t}$ from $\hat G^{t-1}$ as follows.
\begin{enumerate}
\item Expand all nodes\footnote{It is also valid to expand only the first new node on the path of play, that is, the first node on the sampled trajectory that is not previously expanded. That does not change any of our theoretical results. To {\em expand} a node means to add all of its children to the game tree, adjusting the upper- and lower-bound functions as necessary. Other nodes at a given information set are {\em not} necessarily added when one node is; it is therefore possible for information sets to be partially added in a pseudogame. } on the path of play.
\item For each chance node $h$ in $\hat G^t$:
\begin{enumerate}[(i)]
\item If $h$ was encountered on the path of play, update $\hat\sigma^{t}_0(\cdot|h)$ for each according to the action observed at $h$ to be the empirical frequency of play.
\item Let 
\begin{align}\label{eq:rho}
\rho(h) = \sqrt{\frac{1}{2 t_h} \qty(\abs{A_h} \log 2  +  \log t^2 C_tn)}.
\end{align}
where $t_h$ is the number of times $h$ has been sampled (including on this iteration), and $C_t$ is the number of chance nodes in $\hat G^{t}$. Now set $\hat\beta^t_i(h) = u_i(h) + \rho(h) \Delta(h \to *)$, and $\hat\alpha^t_i(h) = u_i(h) - \rho(h) \Delta(h \to *)$.
\end{enumerate}
\end{enumerate}
\end{enumerate}
\end{definition}

We will use $(G^t, \alpha^t, \beta^t)$ to denote the pseudogame with the same game tree as $\hat G^t$, but with the exact correct nature probabilities (that is, no sampling error, and $\rho(h) = 0$). 

\begin{theorem}[Correctness]\label{thm:ucb}
	For any time $t$ and exploration policy $\mc A$, with probability at least $1 - 2/t^2$, for every profile $\sigma$ and player $i$, we have $\hat \alpha_i^t(\sigma) \le \alpha_i(\sigma) \le \beta_i(\sigma) \le \hat \beta_i^t(\sigma)$.
\end{theorem}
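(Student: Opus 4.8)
The plan is to decouple a deterministic tree-propagation argument from a concentration argument, the two meeting exactly at the definition of $\rho$ in \eqref{eq:rho}. I read the theorem's $\alpha_i(\sigma),\beta_i(\sigma)$ as the bound values of the exact-nature pseudogame $(G^t,\alpha^t,\beta^t)$ on the tree of $\hat G^t$ (if instead they denote full-tree bound values the statement is only weaker, since collapsing nodes loosens bounds). I will also assume, \emph{without loss of generality}, that the black box's bounds are \emph{tree-consistent}: $\alpha_i(h\to *)\le u_i(h)+\min_a\alpha_i(h\cdot a\to *)$ and $u_i(h)+\max_a\beta_i(h\cdot a\to *)\le\beta_i(h\to *)$ at every expanded node $h$. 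This is harmless: when a node is expanded one can intersect each child's reported interval with the one propagated from the parent, which only tightens still-valid bounds, and this is precisely the ``adjusting the bound functions as necessary'' in \Cref{def:confidence-sequence}. With this normalization it suffices to prove $\hat\alpha_i^t(\sigma)\le\alpha_i^t(\sigma)\le\beta_i^t(\sigma)\le\hat\beta_i^t(\sigma)$ for every $i$ and $\sigma$; since $(G^t,\alpha^t,\beta^t)$ is a trunk of $(G,u)$, \Cref{def:trunk} then also sandwiches $u_i(\sigma)$ (and the full-tree $\alpha_i(\sigma),\beta_i(\sigma)$) between $\hat\alpha_i^t(\sigma)$ and $\hat\beta_i^t(\sigma)$.

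\emph{The good event.} For each chance node $h$ of $\hat G^t$ let $E_h$ be the event $\|\hat\sigma_0^t(\cdot\mid h)-\sigma_0(\cdot\mid h)\|_1\le 2\rho(h)$, and put $E=\bigcap_h E_h$. Using $\|\hat p-p\|_1=2\max_{S\subseteq A_h}(\hat p(S)-p(S))$, one-sided Hoeffding on the mean of the $t_h$ i.i.d.\ centered Bernoullis $\mathbf 1\{\cdot\in S\}-\sigma_0(S\mid h)$ gives $\Pr[\hat\sigma_0^t(S\mid h)-\sigma_0(S\mid h)\ge\rho(h)]\le\exp(-2t_h\rho(h)^2)$; by \eqref{eq:rho} the exponent equals $|A_h|\log 2+\log(t^2 C_t n)$, so a union bound over the $2^{|A_h|}$ subsets $S$ gives $\Pr[E_h^{c}]\le 1/(t^2 C_t n)$, and a union bound over the $C_t$ chance nodes of $\hat G^t$ gives $\Pr[E^{c}]\le 1/(t^2 n)\le 2/t^2$. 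The one genuinely delicate issue is that the tree, the counts $t_h$, and $C_t$ are all data-dependent (the exploration policy $\mc A$ may react to observed nature outcomes); this is handled by noting that nature plays i.i.d.\ from $\sigma_0(\cdot\mid h)$ at each node regardless of $\mc A$, so that conditioning on the visit counts — equivalently, coupling each node to a pre-drawn i.i.d.\ sample stream and enlarging the events to drop the nuisance conditions $\{h\in\hat G^t,\,t_h=m\}$ — turns each $E_h^{c}$ into a clean empirical-deviation event, and the $\log C_t$ term in $\rho$ is exactly what keeps the union over the randomly-many chance nodes at $1/(t^2 n)$.

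\emph{Propagation.} Condition on $E$ and fix $i$ and $\sigma$. I would prove by bottom-up induction on $\hat G^t$ that $\hat\alpha_i^t(\sigma\mid g)\le\alpha_i^t(\sigma\mid g)\le\beta_i^t(\sigma\mid g)\le\hat\beta_i^t(\sigma\mid g)$ at every node $g$, carrying along the auxiliary invariant $\alpha_i^t(\sigma\mid g),\beta_i^t(\sigma\mid g)\in[\alpha_i(g\to *),\beta_i(g\to *)]$ (immediate from tree-consistency). At an unexpanded (collapsed) leaf the two pseudogames coincide, so the base case is an equality. At an internal non-chance node the recursion uses the same distribution $\sigma_{P(g)}(\cdot\mid g)$ and the exact reward $u_i(g)$ on both sides, so the inequalities pass through unchanged. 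At an internal chance node $h$, writing $v_a:=\beta_i^t(\sigma\mid h\cdot a)$, the inductive hypothesis gives $\hat\beta_i^t(\sigma\mid h)-\beta_i^t(\sigma\mid h)\ge\rho(h)\Delta(h\to *)+\sum_a(\hat\sigma_0^t(a\mid h)-\sigma_0(a\mid h))\,v_a$, and the last sum is $\ge-\tfrac12\|\hat\sigma_0^t(\cdot\mid h)-\sigma_0(\cdot\mid h)\|_1(\max_a v_a-\min_a v_a)\ge-\rho(h)\Delta(h\to *)$ on $E$ — here the invariant bounds $\max_a v_a-\min_a v_a$ by $\beta_i(h\to *)-\alpha_i(h\to *)=\Delta(h\to *)$, and $\|\cdot\|_1\le 2\rho(h)$ on $E$ — whence $\hat\beta_i^t(\sigma\mid h)\ge\beta_i^t(\sigma\mid h)$; the $\alpha$-direction is symmetric, and the invariant at $h$ follows from tree-consistency. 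Reading off the root case and applying \Cref{def:trunk} completes the proof.

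\emph{Main obstacle.} The crux is the concentration step under adaptivity: because the pseudogame's tree and the per-node sample counts are themselves random outputs of the interaction, one cannot simply union-bound over ``all chance nodes of $G$'' (far too lossy), and the argument must exploit the i.i.d.-per-node structure of nature together with the $\log(t^2 C_t n)$ budget baked into $\rho$. A secondary point that is easy to miss is the spread estimate $\max_a v_a-\min_a v_a\le\Delta(h\to *)$ in the chance-node step, which genuinely relies on tree-consistency — the reason I would make that normalization explicit from the outset.
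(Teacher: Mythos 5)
Your proof is correct and follows essentially the same route as the paper's: a Hoeffding bound at each chance node, union-bounded over the $2^{|A_h|}$ subsets of $A_h$ (your subsets $S$ correspond exactly to the paper's extreme-point assignments $u:\op{Children}(h)\to\{\alpha,\beta\}$, via $\|\hat p-p\|_1=2\max_S(\hat p(S)-p(S))$), then over chance nodes, followed by the same leaves-to-root induction in which the $\rho(h)\Delta(h\to *)$ slack absorbs the empirical-distribution error. Your explicit treatment of the adaptive/random tree and of the range invariant (``tree-consistency'') makes rigorous two points the paper leaves implicit, but the underlying argument is the same.
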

Proofs are in the appendix, which can be found on the arXiv version at https://arxiv.org/abs/2009.07384.
In this case, we will call the sequence {\em correct at time $t$}.
These probabilities can be strengthened to any inverse polynomial function of $t$ by replacing $t^2$ in \Cref{eq:rho} with a suitably larger polynomial.

Extra domain-specific information about the chance distributions can easily be incorporated into the bounds. For example, if two chance nodes are known to have the same action distribution, their samples can be merged. If the distribution of a chance node is known exactly, no sampling is necessary at all, and the number of chance nodes $C_t$ in \Cref{eq:rho} may be decremented accordingly.

\begin{definition}
For an exploration policy $\mc A$ creating a confidence sequence $(\hat G^t, \hat \alpha^t, \hat \beta^t)$, the {\em cumulative uncertainty} $U_{i,T}$ for player $i$ after the first $T$ iterations is given by 
\begin{align}
U_{i,T} := \sum_{t=1}^T \hat \Delta^t_i(\sigma^t).
\end{align}
\end{definition}
This can be thought of as the regret of an online optimizer that plays $\sigma^t$ at time $t$, and then observes loss $\hat \beta^t_i - \hat \alpha^t_i$. In a sense, the next result is the main theorem of our paper, and we find it the most surprising result of the paper. All our convergence guarantees stated later in the paper rely on it. 
\begin{theorem}\label{thm:main}
	Suppose that the true rewards are bounded in $[0, 1]$. Then for all times $T$, all players $i$, and {\em any} exploration policy $\mc A$, we have
	\begin{align}
	\E U_{i,T} \le 2C_{T} \sqrt{2TM} + N_T
	\end{align}
	where $N_T$ is the number of total nodes in $\hat G_T$, 
	\begin{align}
	M = \max_{\textup{chance nodes $h$}} \qty( \abs{A_h} \log 2+  \log {2T^2 C_Tn}),
	\end{align}
	and the expectation is over the sampling of games and (if applicable) the randomness of $\mc A$.
\end{theorem}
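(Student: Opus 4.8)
The plan is to expand $U_{i,T}=\sum_{t=1}^{T}\hat\Delta^t_i(\sigma^t)$ along sampled paths and classify, node by node, where the uncertainty comes from. Playing $\sigma^t$ in $\hat G^t$ ends at some leaf $z$ of $\hat G^t$, so $\hat\Delta^t_i(\sigma^t)=\E_z\left[\sum_{\emptyset\preceq h\preceq z}\hat\Delta^t_i(h)\right]$, the expectation using the (empirical) chance distribution of $\hat G^t$. By \Cref{def:confidence-sequence}, $\hat\Delta^t_i(h)=0$ at any already-reached player node or terminal of $G$; $\hat\Delta^t_i(h)=2\rho^t(h)\,\Delta_i(h\to *)$ at any chance node sampled at least once; and $\hat\Delta^t_i(h)\le\Delta_i(h\to *)\le 1$ at any leaf of $\hat G^t$ not yet reached by a playthrough (every not-yet-sampled chance node counts here). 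Hence $U_{i,T}\le\mathrm{(L)}+\mathrm{(C)}$, where the \emph{leaf term} is $\mathrm{(L)}=\sum_t\E_z[\mathbf{1}[z\text{ not yet reached}]]$ and the \emph{chance term} is $\mathrm{(C)}=\sum_t\sum_{h\text{ chance}}2\,\sigma^t(h)\,\rho^t(h)$, with $\sigma^t(h)$ the reach probability of $h$ under $\sigma^t$. I would bound $\E\,\mathrm{(L)}\le N_T$ and $\E\,\mathrm{(C)}\le 2C_T\sqrt{2TM}$ separately.

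For $\mathrm{(L)}$: a node $v$ contributes only while it is in the tree but not yet expanded, and $v$ is expanded exactly on the first iteration whose playthrough reaches it; conditioned on the history through iteration $t-1$, the iteration-$t$ playthrough reaches $v$ with probability exactly $\sigma^t(v)$. Writing $T_v$ for the iteration at which $v$ is added and $E_v$ for the first iteration that reaches $v$, the process $\sum_{s\le t}(\mathbf{1}[v\text{ hit at }s]-\sigma^s(v))$ is a martingale and $\sum_{s\le\min(E_v,T)}\mathbf{1}[v\text{ hit at }s]=\mathbf{1}[E_v\le T]$, so optional stopping gives $\E\sum_{t=T_v}^{E_v-1}\sigma^t(v)\le\Pr[E_v\le T]\le 1$. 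Summing over the at most $N_T$ nodes ever added gives $\E\,\mathrm{(L)}\le N_T$; the lone ``first-visit'' iteration of each chance node (where its bound switches to the $\rho$-form) also contributes at most $1$, since the $\rho$-bounds are never taken looser than the trivial ones, and is absorbed into the same count.

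For $\mathrm{(C)}$: since the bracketed term under the square root in $\rho^t(h)$ is at most $M$ for $t\le T$ and $t_h$ is nondecreasing in $t$, we have $\rho^t(h)\le\sqrt{M/(2\,t_h^{t-1})}$ once $h$ has been sampled before iteration $t$, so $h$'s contribution is at most $\sqrt{2M}\sum_t\sigma^t(h)/\sqrt{t_h^{t-1}}$. Decompose $\sigma^t(h)=\mathbf{1}[h\text{ hit at }t]+(\sigma^t(h)-\mathbf{1}[h\text{ hit at }t])$. On the $k$-th iteration hitting $h$ one has $t_h^{t-1}=k-1$, so the indicator part sums deterministically to $\sum_{k\ge 2}1/\sqrt{k-1}\le 2\sqrt{t_h^T}$; the remainder is a sum of martingale differences — each term has conditional mean $0$ given the history through iteration $t-1$, and its coefficient $1/\sqrt{t_h^{t-1}}$ is measurable there — hence has expectation $0$. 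So $\E\sum_t\sigma^t(h)/\sqrt{t_h^{t-1}}\le 2\,\E\sqrt{t_h^T}\le 2\sqrt{\E\,t_h^T}\le 2\sqrt T$ by Jensen and $t_h^T\le T$. Summing the resulting $2\sqrt{2TM}$ over the at most $C_T$ chance nodes of $\hat G^T$ yields $\E\,\mathrm{(C)}\le 2C_T\sqrt{2TM}$, and adding the two bounds finishes the proof.

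I expect the main obstacle to be getting $\mathrm{(C)}$ to the stated form: the naive bound $\rho^t(h)\le\sqrt{M/(2t_h^t)}$ uses a count not known until iteration $t$'s playthrough is drawn, which breaks the martingale cancellation; bounding instead by the predictable $\sqrt{M/(2t_h^{t-1})}$ (valid since $\rho$ decreases in its count) is exactly what makes the split $\sigma^t(h)=\mathbf{1}[\text{hit}]+(\text{mean-zero})$ go through, and it is what pins down the constant. A secondary subtlety is that the reach probabilities weighting $\hat\Delta^t_i(\sigma^t)$ live under the empirical chance distribution of $\hat G^t$ rather than the true one; this is handled either by noting that $\mathrm{(L)}$ only needs an upper bound on the expected reach probability, or by carrying the (small) empirical-versus-true gap through the optional-stopping step.
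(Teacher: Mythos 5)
Your proof is correct (to the same standard of rigor as the paper's own argument) and arrives at exactly the stated constants, but it is organized quite differently. The paper first replaces the expected uncertainty $U_{i,T}$ by the \emph{sampled} uncertainty $\hat U_T = \sum_t \hat\Delta^t(z_t)$ along the realized trajectories (justified only by a terse ``linearity of expectation''), and then runs a bottom-up induction on the game tree, bounding the subtree quantity $\hat U_K(h)$ by $2C_K(h)\sqrt{2KM}+N_K(h)$ in terms of the visit count $K$ at $h$; the per-chance-node core is the same $\sum_{k\le K}\sqrt{M/(2k)}\le\sqrt{2KM}$ plus a $+1$ per node for first expansion. You instead keep the expectation, decompose $\hat\Delta^t_i(\sigma^t)$ node-by-node into a leaf/expansion term and a chance term, and convert reach probabilities into visit counts per node via optional stopping and a martingale-difference cancellation with the predictable weight $1/\sqrt{t_h^{t-1}}$. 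What your route buys is that the stochastic bookkeeping the paper sweeps into ``linearity of expectation'' is made explicit and honest: in particular you correctly identify that one must use the pre-iteration count $t_h^{t-1}$ (measurable before the playthrough) rather than $t_h^{t}$ for the cancellation to go through, and you handle the first-visit iteration separately, which the paper folds into its $k=1$ term. What the paper's route buys is that, by conditioning on the realized visit counts, the per-node sums become deterministic and no martingale machinery is needed. One caveat applies to both proofs equally: the trajectory at time $t$ is drawn under the \emph{true} chance distribution while $\hat\Delta^t(\sigma^t)$ weights nodes by the \emph{empirical} chance distribution of $\hat G^t$ (which is moreover only fixed after the time-$t$ playthrough), so the identity $\E\hat U_T=\E U_T$ (paper) and the mean-zero property of your martingale differences both require an additional argument or a convention that the uncertainty is measured under the true chance measure; you at least flag this mismatch, though your first proposed fix (that the leaf term ``only needs an upper bound'') does not work as stated since the empirical reach probability can exceed the true one. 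Also note that your step assigning gap at most $\Delta(h\to *)$ to a chance node on its first visit implicitly assumes the $\rho$-based bounds are capped at the trivial bounds $[\alpha(h\to *),\beta(h\to *)]$; this is consistent with how the paper uses them but is not stated in \Cref{def:confidence-sequence}.
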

$M$ is a constant that depends on the final pseudogame $\hat G^T$. Importantly, it does not depend on the actual game $G$! This makes it possible for our approach to give meaningful exploitability guarantees while not exploring the full game. For fixed underlying game and confidence, $M$ increases as $\Theta(\log T)$, and hence $U_{i,T}$ increases as $O(\sqrt{T \log T})$ for a fixed game.

\section{Solving Games via Confidence Intervals}
The above discussion leads naturally to algorithms that generate certificates, which we will now discuss.
\begin{algorithm}[H]
	\caption{Two-player zero-sum certificate finding}\label{alg:zerosum}
	\begin{algorithmic}[1]
		\STATE {\bf Input:} black-box zero-sum game
		\STATE Initialize confidence sequence $(\hat G^0, \hat \alpha^0, \hat \beta^0)$ 
		\FOR{$t = 1, 2, \dots$}
		\STATE Solve $(\hat G^{t-1}, \hat \alpha^{t-1})$ and $(\hat G^{t-1}, \hat \beta^{t-1})$ exactly to obtain equilibria $(\ubar x^{t-1}, \ubar y^{t-1})$ and $(\bar x^{t-1}, \bar y^{t-1})$.\label{step:gamesolve}
		\STATE Create next pseudogame $\hat G^{t}$ by sampling one playthrough according to some profile $\sigma^{t}$\label{step:sample}
		\ENDFOR
	\end{algorithmic}
\end{algorithm}
\begin{definition}
The {\em Nash gap bound} $\eps_t$ at time $t$ of \Cref{alg:zerosum} is the difference between the values of the games with utility functions $\hat \beta^{*t}$ and $\hat \alpha^{*t}$. Formally, $\eps_t = \hat \beta^{*t} - \hat \alpha^{*t}$.
\end{definition}
\begin{proposition}\label{prop:cert-lp-eqm}
	Assuming that the confidence sequence is correct at time $t$, the pessimistic equilibrium $(\ubar x^t, \bar y^t)$ computed by \Cref{alg:zerosum} is an $\eps_t$-equilibrium of $G^t$.
\end{proposition}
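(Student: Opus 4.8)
The plan is to reduce the claim to two ingredients: the minimax theorem (an equilibrium strategy of a two-player zero-sum game secures the value of that game against every opponent response), and the correctness hypothesis $\hat\alpha^t_i(\sigma)\le\alpha^t_i(\sigma)\le\beta^t_i(\sigma)\le\hat\beta^t_i(\sigma)$, applied not just to the played profiles but to \emph{arbitrary} profiles. Recall $\eps_t=\hat\beta^{*t}-\hat\alpha^{*t}$ is the difference between the values of the zero-sum games $(\hat G^t,\hat\beta^t)$ and $(\hat G^t,\hat\alpha^t)$, and that, by \Cref{def:zerosumnash}, $(\ubar x^t,\bar y^t)$ being an $\eps_t$-equilibrium of $G^t$ means its Nash gap $\beta^{*t}(\bar y^t)-\alpha^{*t}(\ubar x^t)\le\eps_t$, where $\beta^{*t}(\bar y^t)=\max_x\beta^t(x,\bar y^t)$ is the value player~1 is held to by $\bar y^t$ in the zero-sum game $(\hat G^t\text{-tree},\beta^t)$ and $\alpha^{*t}(\ubar x^t)=\min_y\alpha^t(\ubar x^t,y)$ is the value player~1 secures with $\ubar x^t$ in $(\hat G^t\text{-tree},\alpha^t)$.

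First I would lower-bound $\alpha^{*t}(\ubar x^t)$. Since $(\ubar x^t,\ubar y^t)$ is an equilibrium of the zero-sum game $(\hat G^t,\hat\alpha^t)$, the minimax theorem says $\ubar x^t$ is an optimal maximin strategy, so $\min_y\hat\alpha^t(\ubar x^t,y)=\hat\alpha^{*t}$. Applying correctness at time $t$ to each profile $(\ubar x^t,y)$ gives $\alpha^t(\ubar x^t,y)\ge\hat\alpha^t(\ubar x^t,y)$; taking $\min$ over $y$ yields $\alpha^{*t}(\ubar x^t)\ge\hat\alpha^{*t}$. Symmetrically, $\bar y^t$ is player~2's optimal (minimax) strategy in $(\hat G^t,\hat\beta^t)$, so $\max_x\hat\beta^t(x,\bar y^t)=\hat\beta^{*t}$; correctness gives $\beta^t(x,\bar y^t)\le\hat\beta^t(x,\bar y^t)$ for every $x$, hence $\beta^{*t}(\bar y^t)\le\hat\beta^{*t}$. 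Combining, the Nash gap of $(\ubar x^t,\bar y^t)$ in $G^t$ is $\beta^{*t}(\bar y^t)-\alpha^{*t}(\ubar x^t)\le\hat\beta^{*t}-\hat\alpha^{*t}=\eps_t$, which is exactly the claim. (A sanity check: $\hat\alpha^t\le\hat\beta^t$ pointwise, so $\hat\alpha^{*t}\le\hat\beta^{*t}$ and $\eps_t\ge0$, so the conclusion is not vacuous.)

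I do not expect a genuine obstacle here: the proposition is essentially a definitional consequence of \Cref{thm:ucb} together with minimax. The one place that needs care is keeping the directions straight — the lower-bound utility $\hat\alpha^t$ must be paired with the maximizing player's robust strategy $\ubar x^t$ and with a $\min$ over opponent responses, while the upper bound $\hat\beta^t$ is paired with $\bar y^t$ and a $\max$; inverting either pairing flips the inequalities. A secondary subtlety worth flagging explicitly in the writeup is that correctness is used in its full ``for every profile $\sigma$'' strength, in particular against the best responses to $\ubar x^t$ and to $\bar y^t$, which are profiles the algorithm never actually plays; but that is precisely the guarantee \Cref{thm:ucb} provides once the sequence is correct at time $t$.
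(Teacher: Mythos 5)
Your argument is correct and is exactly the intended one: the paper's proof is the one-line ``follows immediately from \Cref{thm:ucb},'' and your writeup simply fills in the standard details (optimality of $\ubar x^t$ in $(\hat G^t,\hat\alpha^t)$ and of $\bar y^t$ in $(\hat G^t,\hat\beta^t)$, combined with the ``for every profile'' correctness guarantee to sandwich the Nash gap by $\hat\beta^{*t}-\hat\alpha^{*t}$). The directional pairings you flag are handled correctly, so there is nothing to fix.
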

This allows us to {\em know} (with high probability) when we have found an $\eps$-equilibrium, without expanding the remainder of the game tree, even in the case when chance's strategy is not directly observable. 
The choice of exploration policy in Line~\ref{step:sample} is very important. We now discuss that.
\begin{definition}
The {\em optimistic exploration policy} is $\sigma^t = (\bar x^{t-1}, \ubar y^{t-1})$; that is, both players explore according to the optimistic pseudogame.
\end{definition}
\begin{proposition}\label{prop:uncertainty-nash-gap}
Under the optimistic policy, $\eps_t \le \hat\Delta^t(\sigma^t)$.
\end{proposition}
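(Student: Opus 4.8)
The plan is to isolate a single elementary fact about zero-sum pseudogames and then read the proposition off from it via the minimax theorem. The fact is: in any two-player zero-sum pseudogame $(\tilde G,\alpha,\beta)$, if $(\bar x,\bar y)$ is an equilibrium of the optimistic game $(\tilde G,\beta)$ and $(\ubar x,\ubar y)$ is an equilibrium of the pessimistic game $(\tilde G,\alpha)$, then the Nash gap obeys $\beta^*-\alpha^*\le \Delta(\bar x,\ubar y)$. In words: the profile in which each player uses the equilibrium strategy of whichever of the two games is optimistic \emph{for that player} --- Player~1 using $\bar x$ from the $\beta$-game, Player~2 using $\ubar y$ from the $\alpha$-game --- has uncertainty at least the Nash gap of the pseudogame. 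Granting this, the proposition is immediate: the optimistic exploration policy sets $\sigma^t$ to exactly this cross profile of the pseudogame solved at round $t$, and $\eps_t$ is by definition that pseudogame's Nash gap bound $\hat\beta^{*t}-\hat\alpha^{*t}$, so the fact says precisely $\eps_t\le\hat\Delta^t(\sigma^t)$.

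To prove the fact I would invoke the two one-sided guarantees attached to a zero-sum equilibrium. Since $(\tilde G,\beta)$ is a perfect-recall zero-sum extensive-form game, the minimax theorem applies and $\bar x$ is an optimal maximin strategy of value $\beta^*$; hence $\bar x$ secures at least $\beta^*$ against \emph{every} opponent strategy, so in particular $\beta(\bar x,\ubar y)\ge\beta^*$. Symmetrically, $\ubar y$ is an optimal minimax strategy of $(\tilde G,\alpha)$ and holds Player~1 to at most $\alpha^*$ against every Player~1 strategy, so $\alpha(\bar x,\ubar y)\le\alpha^*$. Subtracting, $\Delta(\bar x,\ubar y)=\beta(\bar x,\ubar y)-\alpha(\bar x,\ubar y)\ge\beta^*-\alpha^*$, which is the fact. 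The two inequalities use only the optimality of one strategy in its own game, so it is irrelevant that $\bar x$ and $\ubar y$ are drawn from different pseudogames.

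The only delicate point --- and the one obstacle worth flagging --- is bookkeeping about which pseudogame is in play: the confidence-sequence protocol computes $\sigma^t$ from the pseudogame of the previous round and only afterward expands the tree, so one must check that $\sigma^t$, $\eps_t$, and $\hat\Delta^t$ are all evaluated on the same pseudogame and that the profile is defined on its entire tree. Adopting the convention that $\hat G^t$ denotes the pseudogame whose optimistic and pessimistic equilibria were just computed makes $\bar x$ and $\ubar y$ strategies for $\hat G^t$, and the two-line argument then applies verbatim; if instead one expands nodes only along the sampled trajectory (the footnoted variant of \Cref{def:confidence-sequence}), those nodes are unreached under $\sigma^t$ before expansion and carry valid trunk bounds afterward, so extending $\sigma^t$ to them arbitrarily is harmless. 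I do not expect any difficulty beyond this indexing care --- the mathematical substance is entirely the maximin/minimax comparison above.
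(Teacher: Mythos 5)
Your proof is correct and is precisely the argument the paper leaves implicit (its entire proof is the line ``follows immediately from the definition of a pseudogame''): $\bar x$ secures at least $\hat\beta^{*t}$ against any opponent in the optimistic game, $\ubar y$ holds Player~1 to at most $\hat\alpha^{*t}$ in the pessimistic game, and subtracting gives $\hat\Delta^t(\bar x,\ubar y)\ge \hat\beta^{*t}-\hat\alpha^{*t}=\eps_t$. Your flag about the $t$ versus $t-1$ indexing is also the right reading of the paper's loose notation and does not affect the substance.
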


Together with \Cref{thm:main}, this immediately gives us a convergence bound on \Cref{alg:zerosum}:
\begin{corollary}\label{thm:main-lp}
Suppose we use optimistic exploration, and the true game $G$ has rewards bounded in $[0,1]$. Let $\eps^*_T$ be the known bound on the Nash gap of the best pessimistic equilibrium found so far; that is, $\eps^*_T = \min_{t \le T} \eps_t$. Then
\begin{align}
\E\eps_T^* \le 2 C_T \sqrt{\frac{2M}{T}} + \frac{1}{T} N_T.
\end{align}
where again the expectation is over the sampling of games and randomness of $\mc A$.
\end{corollary}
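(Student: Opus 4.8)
The plan is to chain three facts that are already in hand. First, $\eps^*_T=\min_{t\le T}\eps_t$ is at most the average $\frac1T\sum_{t=1}^T\eps_t$, simply because a minimum never exceeds a mean and every $\eps_t=\hat\beta^{*t}-\hat\alpha^{*t}\ge 0$. Second, since we run the optimistic exploration policy, \Cref{prop:uncertainty-nash-gap} applies at every step and gives $\eps_t\le\hat\Delta^t(\sigma^t)$. Third, summing this bound over $t=1,\dots,T$ produces exactly the cumulative uncertainty: in the zero-sum pseudogame the bound functions satisfy $\hat\alpha_2=-\hat\beta_1$ and $\hat\beta_2=-\hat\alpha_1$, so $\hat\Delta^t_1=\hat\beta^t_1-\hat\alpha^t_1=\hat\Delta^t_2=:\hat\Delta^t$, and therefore $\sum_{t=1}^T\hat\Delta^t(\sigma^t)=U_{1,T}$, the cumulative uncertainty of player~$1$.

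Putting these together yields $\eps^*_T\le\frac1T U_{1,T}$ pathwise on the realized run. Taking expectations over the sampled playthroughs (and any internal randomness of the exploration policy) and invoking \Cref{thm:main} with $i=1$---whose hypothesis, that the true rewards are bounded in $[0,1]$, is exactly what we assumed---gives $\E\eps^*_T\le\frac1T\E U_{1,T}\le\frac1T\bigl(2C_T\sqrt{2TM}+N_T\bigr)=2C_T\sqrt{2M/T}+\frac1T N_T$, using $\sqrt{2TM}/T=\sqrt{2M/T}$. This is precisely the claimed bound.

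There is essentially no substantive obstacle here; the only points that need care are bookkeeping. One is the index alignment between $\sigma^t$, the pseudogame $\hat G^t$ it induces, and the gap bound $\eps_t$---but this is already absorbed into the statements of \Cref{prop:uncertainty-nash-gap} (phrased for the optimistic policy) and \Cref{thm:main} (which permits an arbitrary, possibly adaptive $\mc A$), so nothing extra is required. The other is that $C_T$, $M$, and $N_T$ are themselves random, depending on which nodes were expanded; since they appear with matching coefficients on the right-hand sides of both \Cref{thm:main} and the corollary, and since the reduction $\eps^*_T\le U_{1,T}/T$ holds on every sample path, linearity of expectation together with the $\min\le\text{mean}$ step is all that is needed.
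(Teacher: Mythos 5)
Your proof is correct and follows exactly the route the paper intends: combine \Cref{prop:uncertainty-nash-gap} (valid under optimistic exploration) with the $\min \le \text{mean}$ bound to get $\eps^*_T \le U_T/T$ pathwise, then take expectations and apply \Cref{thm:main}; the paper itself treats this as immediate and gives no further proof. The only superfluous remark is the appeal to $\eps_t \ge 0$, since a minimum never exceeds a mean regardless of sign.
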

We thus use optimistic exploration in our experiments (\Cref{s:experiments}). 
This is {\em not} the same kind of bound that is achieved by MCCFR and related algorithms. Those algorithms guarantee an upper bound on exploitability as a function of {\em total number of iterations}; here, we bound the number of {\em samples}. After every sample, our \Cref{alg:zerosum} solves the entire pseudogame generated so far. This may be expensive (though, since the game solves can be implemented as LP solves with warm starts from the previous iteration, in practice they are still reasonably efficient). However, as in \citet{Zhang20:Small}, our convergence guarantee has the distinct advantage of being dependent only on the current pseudogame, not the underlying full game. Furthermore, as the experiments later in this paper show, in practice, $\eps^*_T$ is usually significantly smaller than its worst-case bound.

In several special cases, \Cref{alg:zerosum} corresponds naturally to known algorithms and results. 
\begin{enumerate}
\item {\em Perfect information and deterministic}: Assuming the game solves return pure strategies (which is always possible here), \Cref{alg:zerosum} is exactly the same as Algorithm~6.7 of \citet{Zhang20:Small}. In particular, in the two-player case, it is equivalent to incremental alpha-beta search; in the one-player case, it is equivalent to A* search~\cite{Hart68:Formal}, where the upper bound $\beta(h \to *)$ corresponds to the heuristic lower bound on the total distance from the root to the goal.
\item {\em Nature probabilities known}: \Cref{alg:zerosum} is very similar (but not identical, due to the simpler black-box model) to Algorithm~6.7 of \citet{Zhang20:Small}.
\item {\em Multi-armed stochastic bandit}: \Cref{alg:zerosum} is, up to a constant factor in \Cref{eq:rho}, equivalent to UCB1~ \cite{Auer02:Finite}, and \Cref{thm:main-lp} matches the worst-case $O(\sqrt{T \log T})$ dependence on $T$ in the regret bound of UCB1. The worse dependence on the number of arms can be remedied by a more detailed analysis, which we skip here.
\end{enumerate}

 \Cref{alg:zerosum} and Algorithm~6.7 from~\citet{Zhang20:Small} may seem similar to the extensive-form double-oracle algorithm~\cite{Bosansky14:Exact}. However, ours does not compute best responses in the full game $G$, but rather only in the current pseudogame $\tilde G$. Thus, we maintain the guarantee of never needing to use any information about the game beyond what is given by the simulator during the limited sampling---neither while solving nor while computing exploitability bounds. To our knowledge, this work and~\citet{Zhang20:Small} are the first to achieve this guarantee.

In practice, due to the computational cost of the game solves, we recommend running several samples per game solve. This enhances computational efficiency in domains where the game is not prohibitively large for LPs, or samples are relatively fast to obtain. 

\section{Faster Iterates via Regret Minimization}
A major weakness of \Cref{alg:zerosum} is its reliance on an exact game solver as a subroutine, which can be slow or even infeasible computationally. Could we replace the exact solver with a single iteration of some iterative game solver, and still maintain the $\tilde O(1/\sqrt{T})$ convergence rate? In this section we show how to do this with regret minimizers.

\subsection{Extendable Regret Minimizers}

We now define a class of regret minimizers, which we coin \textit{extendable}, which we can use to achieve the goal mentioned above. Intuitively, for an extendable family of regret minimizers, expending a leaf of the pseudogame 
does not change the behavior or regret of the regret minimizer, so long as the past losses do not depend on the actions taken at the new information set, which is always the case with our algorithms because they have never visited the new information set. Thus, when working with a extendable family $\mc A$, it makes sense to speak about ``running $\mc A$ on a game $G$'', even if information sets may be added to $G$ over time. We will exploit this language.
For example, CFR (thus also MCCFR, since it is nothing but CFR with stochastic gradient estimates \cite{Farina20:Stochastic}) is a extendable family. In this case, the function $\phi$ described below simply initializes regrets at the new information set to $0$.

Formally, let $\mc L(X)$ be the set of linear functions on $X$.  Consider a regret minimizer $\mc A_X$ on $X$. We will think of $\mc A_X$ as maintaining a {\em state} $s_t \in S_X$. At any time $t$, the algorithm outputs strategy $x_t \gets x_X(s_t)$ for some map $x_X : S_X \to X$, and after observing loss $\ell_t$, the algorithm updates the state via $s_{t+1} \gets u_X(s_t, \ell_t)$, where $u_X : S_X \times \mc L(X) \to S_X$ is an update function. As such, $\mc A_X$ can be thought of as a pair $(x_X, u_X)$. 
For example, when $X$ is the $n$-simplex and $\mc A_X$ is regret matching~\cite{Hart00:Simple}, $S_X$ is $\R^n$, the update function is $u_X(s_t, \ell_t) = s_t + \ell_t - \ev{\ell_t, x_X(s_t)}$, and the strategy is $x_C(s_t)(i) \propto [s_t(i)]_+$.
\begin{definition}
Let $\mc A = \{ \mc A_X\}$ be a family of regret minimizers, one for each extensive-form strategy space $X$. $\mc A$ is {\em extendable}
if for every $X$ and every $X' \subseteq X \times \R^m$ formed by adding a decision point (with $m$ actions) to $X$, there is a function $\phi : S_X \to S_{X'}$ such that for every state $s \in S_X$:
\begin{enumerate}
\item $x_{X'}(\phi(s))$ agrees with $x_X(s)$ in $X$, and
\item for every loss function $\ell \in \mc L(X)$, we have
$\phi(u_X(s, \ell)) = u_{X'}(\phi(s), (\ell, 0))$, where $(\ell, 0) {\in} \mc L(X') {=} \mc L(X) {\times} \mc L(\R^m)$. \end{enumerate}
\end{definition}

\subsection{Putting It Together}
\begin{algorithm}[H]
	\caption{Certificate-finding with regret minimization}
	\begin{algorithmic}[1]\label{alg:rm}
		\STATE {\bf Input:} black-box game, extendable family $\mc A^i$ for each player $i$
		\STATE Initialize confidence sequence $(\hat G^0, \hat \alpha^0, \hat \beta^0)$
		\FOR{$t = 1, 2, \dots$}\label{step:for}
		\STATE Query each $\mc A^i$ to obtain a strategy $\sigma^t_i$
		\STATE Submit loss $-\hat \beta^t_i(\cdot, \sigma^t_{-i})$ to $\mc A^i$
		\STATE Create next pseudogame $\hat G^{t}$ by sampling one playthrough according to $\sigma^t$\label{line:samp}
		\ENDFOR
	\end{algorithmic}
\end{algorithm}

Even in the two-player zero-sum case, this algorithm is {\em not} the exact generalization of \Cref{alg:zerosum}.  That generalization would involve independently solving the lower- and upper-bound games $(\hat G_t, \hat \alpha_t)$ and $(\hat G_t, \hat \beta_t)$ using a total of {\em four} regret minimizers, not two.
This algorithm has no need to store or refer to pessimistic strategies. It suffices to use only the optimistic strategy. As usual when dealing with regret minimization, we will discuss convergence of the average (optimistic) strategy played by each player.
\begin{proposition}\label{prop:correct-rm}
Suppose that the true rewards are bounded in $[0, 1]$. After $t$ iterations of the for loop on Line~\ref{step:for}, assuming the correctness of the confidence  sequence at time $t$,  the average optimistic profile $\bar \sigma^t$ forms a coarse-correlated approximate equilibrium of $G^t$, in which the equilibrium gap for player $i$ is at most
$
\eps_{i,t} = \hat \beta^{*t}_i(\bar \sigma_{-i}^t) - \hat \alpha^t_i(\bar \sigma^t).
$
\end{proposition}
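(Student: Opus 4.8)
The plan is to reduce the statement to \Cref{def:pseudonash} together with the correctness guarantee \Cref{thm:ucb}; notably, the regret bounds of the regret minimizers $\mc A^i$ play no role here (they enter only later, via \Cref{thm:main}, to show that $\eps_{i,t}$ is in fact small). First I would pin down that $\bar\sigma^t$ is a well-defined correlated profile of the pseudogame $(G^t, \alpha^t, \beta^t)$, which has the same tree as $\hat G^t$. Because each $\mc A^i$ is extendable, running it through the for loop of the algorithm is equivalent to running it on the fixed sequence-form strategy space $X^t_i$ of $G^t$, with each observed loss $-\hat\beta^s_i(\cdot, \sigma^s_{-i})$ extended by zero onto the information sets of $G^t$ not yet present at iteration $s$; this extension is legitimate precisely because those information sets have never been visited, so the losses submitted so far do not depend on the actions there. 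Hence each iterate $\sigma^s_i$ is a well-defined point of $X^t_i$ (the extendability maps $\phi$ supply its values at information sets not yet present when $\sigma^s$ was produced), and $\bar\sigma^t$---the uniform mixture of the product profiles $\sigma^1, \dots, \sigma^t$---is a correlated profile of $G^t$.

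By \Cref{def:pseudonash}, the equilibrium gap of $\bar\sigma^t$ for player $i$ in $(G^t, \alpha^t, \beta^t)$ is $\beta^{*t}_i(\bar\sigma^t_{-i}) - \alpha^t_i(\bar\sigma^t)$, where $\beta^{*t}_i(\bar\sigma^t_{-i}) := \max_{\sigma_i} \beta^t_i(\sigma_i, \bar\sigma^t_{-i})$. Since $\bar\sigma^t$ is the uniform mixture of $\sigma^1, \dots, \sigma^t$ and each $\sigma^s$ is a product profile, linearity of expectation gives
\[
\beta^{*t}_i(\bar\sigma^t_{-i}) = \max_{\sigma_i} \frac1t \sum_{s=1}^t \beta^t_i(\sigma_i, \sigma^s_{-i}), \qquad \alpha^t_i(\bar\sigma^t) = \frac1t \sum_{s=1}^t \alpha^t_i(\sigma^s),
\]
and likewise for the hatted quantities $\hat\beta^t_i, \hat\alpha^t_i$. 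Now apply \Cref{thm:ucb} at time $t$ (whose conclusion we are assuming), which gives $\hat\alpha^t_i(\sigma) \le \alpha^t_i(\sigma) \le \beta^t_i(\sigma) \le \hat\beta^t_i(\sigma)$ for every profile $\sigma$. Using the leftmost inequality with $\sigma = \sigma^s$ and averaging over $s$ yields $\alpha^t_i(\bar\sigma^t) \ge \hat\alpha^t_i(\bar\sigma^t)$; using the rightmost inequality with $\sigma = (\sigma_i, \sigma^s_{-i})$ for each fixed $\sigma_i$, averaging over $s$, and then taking the maximum over $\sigma_i$ on both sides yields $\beta^{*t}_i(\bar\sigma^t_{-i}) \le \hat\beta^{*t}_i(\bar\sigma^t_{-i})$. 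Subtracting,
\[
\beta^{*t}_i(\bar\sigma^t_{-i}) - \alpha^t_i(\bar\sigma^t) \;\le\; \hat\beta^{*t}_i(\bar\sigma^t_{-i}) - \hat\alpha^t_i(\bar\sigma^t) \;=\; \eps_{i,t},
\]
which is exactly the asserted bound on the equilibrium gap; that $\eps_{i,t} \ge 0$ follows from $\hat\beta^{*t}_i(\bar\sigma^t_{-i}) \ge \hat\beta^t_i(\bar\sigma^t) \ge \hat\alpha^t_i(\bar\sigma^t)$.

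The genuinely delicate point, which I expect to be the main obstacle, is the first step: establishing carefully via extendability that all iterates $\sigma^1, \dots, \sigma^t$ can be regarded as living in the common strategy space of $G^t$, and that the coarse-correlated best-response value against the mixture $\bar\sigma^t_{-i}$ really does decompose into the per-iterate average above---in particular that a deviator who cannot see the shared randomness cannot condition its strategy on which $\sigma^s$ was drawn. Everything after that is a one-line application of \Cref{thm:ucb} and \Cref{def:pseudonash}.
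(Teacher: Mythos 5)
Your proof is correct and takes the same route as the paper, whose entire proof of this proposition is the single line ``follows immediately from \Cref{thm:ucb}'': the content is exactly your combination of \Cref{def:pseudonash}, the per-iterate sandwich $\hat\alpha^t_i \le \alpha^t_i \le \beta^t_i \le \hat\beta^t_i$ from \Cref{thm:ucb}, and linearity over the uniform mixture to handle the coarse-correlated best response. The extendability bookkeeping you flag as the delicate point is a legitimate well-definedness detail that the paper silently elides, and your handling of it is fine.
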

Thus, \Cref{alg:rm} is an anytime algorithm whose equilibrium gap bound at any time $t$ can be easily computed by linear passes through the pseudogame $\hat G^t$.
In the two-player zero-sum case (wherein, for notation, $\beta = \beta_1$ and $\alpha = -\beta_2$ and $\bar\sigma^t = (\bar x^t, \ubar y^t)$), we can use the slightly tighter
$
\eps_t = \hat \beta^{*t}(\ubar y^t) - \hat \alpha^{*t}(\bar x^t)
$
as a Nash gap bound.

\subsection{Convergence Rate}\label{s:convergence-problem}
\begin{figure*}[!t]
\centering
\includegraphics[scale=0.5]{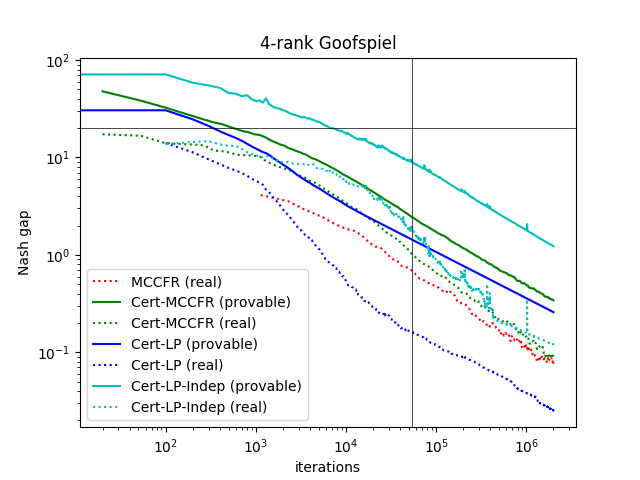}
\includegraphics[scale=0.5]{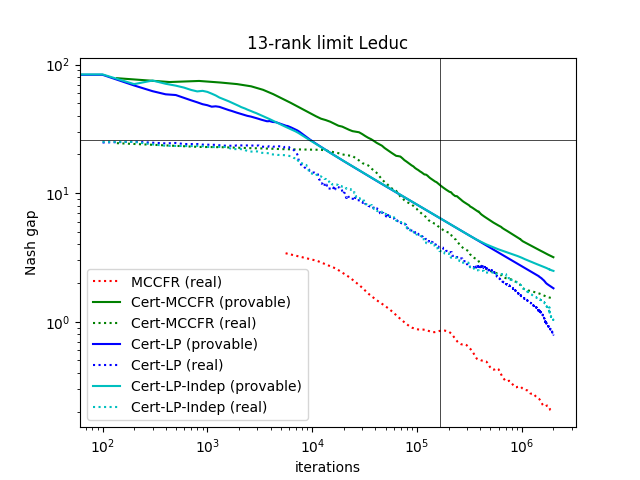}
\includegraphics[scale=0.5]{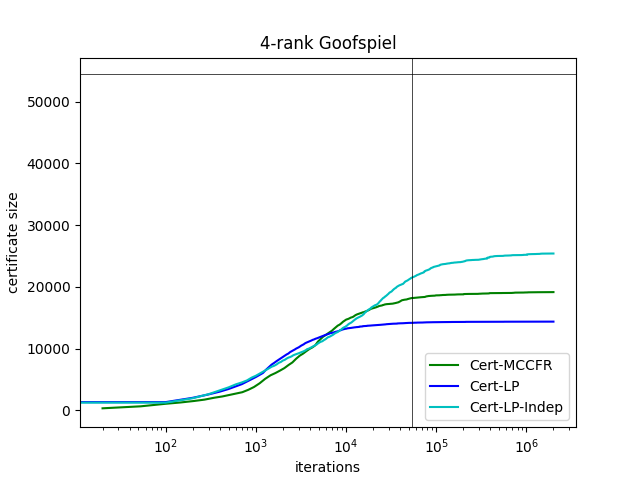}
\includegraphics[scale=0.5]{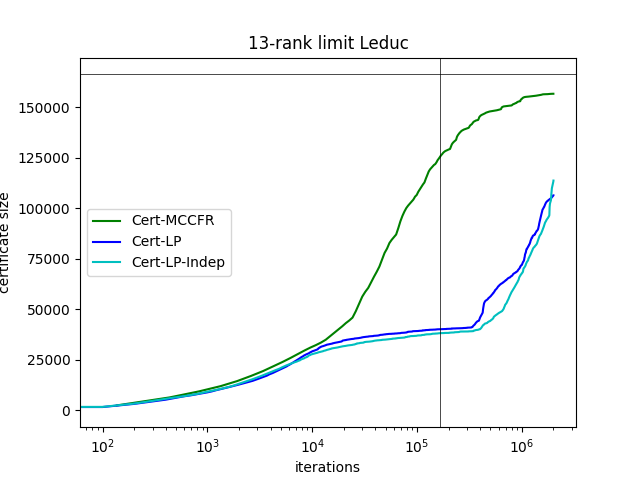}
\caption{Convergence of \Cref{alg:zerosum} and \Cref{alg:rm} in 4-rank Goofspiel and 13-card limit Leduc\protect\footnotemark. To be consistent with the other algorithms, one ``iteration'' of MCCFR consists of one accepted loss vector per player. For the other algorithms, one ``iteration'' is one playthrough. In all cases, we show both the {\em provable} equilibrium gap $\hat \beta^{*t}(\sigma^t) - \hat \alpha^{*t}(\sigma^t)$ and the {\em true} equilibrium gap $\beta^{*t}(\sigma^t) - \alpha^{*t}(\sigma^t)$. The exception is MCCFR, which on its own does not give provable equilibrium gaps in the same way. The horizontal line in the Nash gap graphs is at the game's reward range (Goofspiel has reward range $[-10, 10]$ and 13-rank Leduc has $[-13, 13]$, so the lines are at $20$ and $26$, respectively). The vertical line in both graphs is at the number of nodes in the game (Goofspiel has 54,421 nodes and 13-rank Leduc has 166,366).   }\label{fig:experiments}
\end{figure*}

\begin{table*}
    \centering
    \def\arraystretch{1.1}
    \begin{tabular}{c | c | c}
    & {\bf Sampling-limited} & {\bf Compute-limited} \\\hline
        {\bf Unknown nature distributions} & \Cref{alg:zerosum} with LP solver & \multirow{2}{*}{\makecell{\Cref{alg:rm} with a CFR variant \\ (\textit{e.g.}, outcome-sampling MCCFR)}}
        \\\cline{1-2}
        \makecell{{\bf Known nature distributions}} &  \makecell{Algorithm~6.7 of \citet{Zhang20:Small}}
    \end{tabular}
    \caption{Algorithms we suggest by use case in two-player zero-sum games. {\em Sampling-limited} means that the black-box game simulator is relatively slow or expensive compared to solving the pseudogames. {\em Compute-limited} means that the simulator is fast or cheap compared to solving the pseudogames. In general-sum games, only \Cref{alg:rm} is usable.} 
    \label{tab:summary}
\end{table*}

Annoyingly, it is {\em not} the case in general that $\eps_{i,t} = \tilde O(N_t/\sqrt{t})$.  A counterexample is provided in the appendix. Intuitively, the reason is that, for a fixed strategy $\sigma$, the upper bound $\hat \beta^t(\sigma)$ is not a monotonically nonincreasing function of $t$; indeed, for strategies $\sigma$ that are not sampled very frequently, $\hat \beta^t(\sigma)$ may fluctuate by large amounts even when $t$ is large. However, the nonmonotonicity of $\hat \beta^t$ is, in some sense, necessary to achieve the high-probability correctness guarantee. If $\hat \beta^t$ does not increase over time, then the probability that it is an incorrect bound remains constant, rather than decreasing polynomially with time as would be desired. 

To study the convergence rate of \Cref{alg:rm}, then, we will instead analyze the quantity
\begin{align}
\bar \eps_{i,T} &= \max_{\sigma_i} \frac{1}{T} \sum_{t=1}^T \qty[\hat \beta^t_i(\sigma_i, \sigma^t_{-i}) - \hat \alpha^t_i(\sigma^t)] + O\qty(\frac{1}{\sqrt{T}})
\\&= \frac{1}{T} \qty[R_{i,T} + U_{i,T}] + O\qty(\frac{1}{\sqrt{T}})
\end{align}
where the $O$ hides only an absolute constant.
This quantity is identical to $\eps_{i,t}$ except that it uses $\hat \beta^t_i$ with $\sigma^t_{-i}$ instead of $\hat \beta^T_i$ to match the regret term, and has an extra error term added. 
\begin{proposition}\label{prop:eps-bar-correctness}
With probability $1 - O(1/\sqrt{T})$, $\bar\eps_{i,T}$ is an actual equilibrium gap bound.
\end{proposition}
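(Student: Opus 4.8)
The plan is to show that, on an event of probability $1 - O(1/\sqrt T)$, $\bar\eps_{i,T}$ upper-bounds the equilibrium gap of $\bar\sigma^T$ for player $i$ in the exact-nature trunk obtained at time $T$; by \Cref{prop:reasonable} this is exactly what it means for $\bar\eps_{i,T}$ to be a valid equilibrium gap bound for the completed profile in $G$. Writing $\alpha^T_i,\beta^T_i$ for the (refined, see below) bounds of that trunk, multilinearity of the bound functions and the coarse-correlated structure of $\bar\sigma^T$ give
\begin{align}
\beta^{*T}_i(\bar\sigma^T_{-i}) - \alpha^T_i(\bar\sigma^T) = \max_{\sigma_i}\frac1T\sum_{t=1}^T\qty[\beta^T_i(\sigma_i,\sigma^t_{-i}) - \alpha^T_i(\sigma^t)],
\end{align}
so it suffices to bound the right-hand side by $\bar\eps_{i,T}$.

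Two facts feed in. First, the exact-nature bounds are monotone in $t$: expanding a leaf replaces its bound by the intersection of that bound with the bounds its children contribute, so $\beta^t_i(\sigma)$ is nonincreasing and $\alpha^t_i(\sigma)$ is nondecreasing in $t$; in particular $\beta^T_i(\sigma_i,\sigma^t_{-i}) \le \beta^t_i(\sigma_i,\sigma^t_{-i})$ and $\alpha^T_i(\sigma^t)\ge\alpha^t_i(\sigma^t)$ for every $t\le T$. Second, by \Cref{thm:ucb}, for each fixed $t$ the confidence sequence is correct at time $t$ with probability at least $1-2/t^2$, and on that event $\beta^t_i(\sigma)\le\hat\beta^t_i(\sigma)$ and $\hat\alpha^t_i(\sigma)\le\alpha^t_i(\sigma)$ for all profiles $\sigma$. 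Chaining the two, whenever the sequence is correct at time $t$ we have $\beta^T_i(\sigma_i,\sigma^t_{-i}) - \alpha^T_i(\sigma^t) \le \hat\beta^t_i(\sigma_i,\sigma^t_{-i}) - \hat\alpha^t_i(\sigma^t)$: the $t$-th summand above is dominated by the $t$-th summand in the definition of $\bar\eps_{i,T}$.

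The delicate part---and the reason the $O(1/\sqrt T)$ slack is baked into $\bar\eps_{i,T}$---is the probability accounting. A union bound demanding correctness at every $t\le T$ is useless ($\sum_t 2/t^2$ does not tend to $0$), and conditioning on correctness at $t=T$ alone does not help, because $\hat\beta^t$ is deliberately non-monotone in $t$ (\Cref{s:convergence-problem}), so the early summands of $\bar\eps_{i,T}$ are not controlled by $\hat\beta^T$. Instead, fix $T_0 = \Theta(\sqrt T)$ and let $\mc E$ be the event that the sequence is correct at every $t\in\{T_0,\dots,T\}$; then $\Pr[\neg\mc E]\le\sum_{t=T_0}^T 2/t^2 = O(1/T_0)$. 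On $\mc E$, split the sum at $T_0$: for $t\ge T_0$ replace each summand by its $\hat\beta^t,\hat\alpha^t$ counterpart using the domination above, and for $t<T_0$ bound each summand crudely by a game-dependent constant $B$ (the width of the range of achievable pseudogame utilities; for fixed $G$ this depends on $T$ only through a benign $\sqrt{\log T}$ from the $\rho$-terms, which we fold into $B$). Finally re-extend the surviving sum over $t\ge T_0$ to all $t\le T$: each reinstated summand $\hat\beta^t_i(\sigma_i,\sigma^t_{-i}) - \hat\alpha^t_i(\sigma^t)$ is at least $-B$, costing another $BT_0/T$. This yields
\begin{align}
\max_{\sigma_i}\frac1T\sum_{t=1}^T\qty[\beta^T_i(\sigma_i,\sigma^t_{-i}) - \alpha^T_i(\sigma^t)] \;\le\; \max_{\sigma_i}\frac1T\sum_{t=1}^T\qty[\hat\beta^t_i(\sigma_i,\sigma^t_{-i}) - \hat\alpha^t_i(\sigma^t)] + \frac{2BT_0}{T}.
\end{align}
Choosing the constant in $T_0=\Theta(\sqrt T)$ small enough that $2BT_0/T$ is at most the (absolute-constant) $O(1/\sqrt T)$ slack in $\bar\eps_{i,T}$ makes the right-hand side at most $\bar\eps_{i,T}$; the same choice keeps $\Pr[\neg\mc E]=O(B/\sqrt T)=O(1/\sqrt T)$, now with a game-dependent hidden constant, which is the claimed bound.

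The one real obstacle is this last paragraph: squeezing a single $1-O(1/\sqrt T)$ guarantee out of confidence bounds that are only correct with probability $1-2/t^2$ at each individual $t$ and that are intentionally non-monotone in $t$. The threshold $T_0=\Theta(\sqrt T)$ is exactly the sweet spot balancing the union-bound failure $O(1/T_0)$ against the crude early-iteration error $O(T_0/T)$, and the $O(1/\sqrt T)$ term in the definition of $\bar\eps_{i,T}$ exists precisely to absorb the latter. A minor point to verify is the monotonicity of $\beta^t_i,\alpha^t_i$ claimed above; if the stated confidence-sequence construction does not intersect a newly expanded node's inherited bounds with its children's, one simply carries out the entire argument with the intersected bounds, which are still valid trunk bounds and only make the equilibrium gap smaller, so \Cref{prop:reasonable} still applies.
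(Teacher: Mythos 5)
Your proof is correct and follows essentially the same route as the paper's: a union bound on the correctness of the confidence sequence over all $t \ge \Theta(\sqrt{T})$, domination of those summands by their $\hat\beta^t,\hat\alpha^t$ counterparts, and absorption of the $O(\sqrt{T})$ uncontrolled early iterations into the $O(1/\sqrt{T})$ slack built into $\bar\eps_{i,T}$. The paper's proof is a single sentence ending in ``the bound follows''; you have simply spelled out the details it leaves implicit (including the monotonicity/tightening of the exact-nature bounds needed to pass from $\beta^T_i$ to $\beta^t_i$).
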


By \Cref{thm:main}, $U_T = \tilde O(N_T\sqrt{T})$. Thus, this theorem matches the worst-case convergence of any algorithm with regret $\tilde O(N_T/\sqrt{T})$, up to a logarithmic factor. For example, using CFR and variants thereof matches the bound of \Cref{thm:main-lp} with iterates that are linear time in the size of the pseudogame. With MCCFR, the iterates can be made even faster, and due to \citet{Farina20:Stochastic}, even outcome-sampling MCCFR can be used without breaking the $\tilde O(N_T/\sqrt{T})$ runtime bound.

Unfortunately, there is a further problem. It is often unwieldy to compute $\bar \eps_{i,T}$. For example, if using outcome-sampling MCCFR, one may not even have access to the true bounds $\hat \beta^t(\cdot, \sigma^t_{-i})$ (and similar for $\alpha$) but only stochastic estimates $\tilde \beta^t(\cdot, \tilde \sigma^t_{-i})$ with the correct conditional expectation \cite{Farina20:Stochastic}. In that case, the stochastic estimate may be used as a substitute to create a stochastic equilibrium gap bound 
{\footnotesize
\begin{align}
\tilde \eps_{i,T} &= \max_{\sigma_i} \frac{1}{T} \!\sum_{t=1}^T\! \qty[\tilde \beta^t_i(\sigma_i, \sigma^t_{-i}) - \tilde \alpha^t_i(\sigma^t)] 
+O\qty(\!M\sqrt{\frac{1}{T} \log T})
\end{align}
}%
where $M$ is a bound on the norm of the estimates; \textit{i.e.}, $\abs{\tilde \beta^t_i(\sigma_i, \tilde \sigma^t_i) - \tilde \beta^t_i(\sigma'_i, \tilde \sigma^t_i) } \le M$ for every pair of strategies $\sigma_i, \sigma_i'$. As discussed by \citet{Farina20:Stochastic}, with a uniform sampling vector, we can achieve $M \le N_T$. 
\begin{proposition}\label{prop:eps-tilde-correctness}
With probability $1 - 1/T$, for every time $T$ and player $i$, we have $\tilde \eps_{i, T} \ge \bar \eps_{i,T}$.
\end{proposition}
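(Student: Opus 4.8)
The plan is a term-by-term comparison in which I replace the true pseudogame bounds $\hat\beta^t_i,\hat\alpha^t_i$ by their stochastic surrogates $\tilde\beta^t_i,\tilde\alpha^t_i$ and control the error by martingale concentration. Since the $\alpha$-terms in $\bar\eps_{i,T}$ and $\tilde\eps_{i,T}$ do not involve the maximization over $\sigma_i$, I write $\bar\eps_{i,T}=B_i-A_i+O(1/\sqrt T)$ and $\tilde\eps_{i,T}=\tilde B_i-\tilde A_i+O(M\sqrt{\log T/T})$ with $A_i=\tfrac1T\sum_t\hat\alpha^t_i(\sigma^t)$, $\tilde A_i=\tfrac1T\sum_t\tilde\alpha^t_i(\sigma^t)$, $B_i=\tfrac1T\max_{\sigma_i}\sum_t\hat\beta^t_i(\sigma_i,\sigma^t_{-i})$, $\tilde B_i=\tfrac1T\max_{\sigma_i}\sum_t\tilde\beta^t_i(\sigma_i,\sigma^t_{-i})$. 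Using $\max f-\max g\le\max(f-g)$ it then suffices to show that, with probability at least $1-1/T$, for every player $i$,
\[
\frac1T\max_{\sigma_i}\sum_{t=1}^T\big[\hat\beta^t_i(\sigma_i,\sigma^t_{-i})-\tilde\beta^t_i(\sigma_i,\sigma^t_{-i})\big]\;+\;\frac1T\sum_{t=1}^T\big[\tilde\alpha^t_i(\sigma^t)-\hat\alpha^t_i(\sigma^t)\big]\;\le\;O\!\Big(M\sqrt{\tfrac{\log T}{T}}\Big),
\]
where I am free to take the hidden constant here strictly below the one in the $M\sqrt{\log T/T}$ term of $\tilde\eps_{i,T}$ (after subtracting the $O(1/\sqrt T)$ slack of $\bar\eps_{i,T}$, which is harmless since $M\ge1$) --- that constant is at our disposal in the definition of $\tilde\eps_{i,T}$.

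The second sum is the easy one. Let $\mc F_{t-1}$ be the $\sigma$-algebra generated by the first $t-1$ iterations; the played strategies $\sigma^t$ (hence $\sigma^t_{-i}$) are $\mc F_{t-1}$-measurable, and by the defining property of the MCCFR-style estimators (\citet{Farina20:Stochastic}) we have $\E[\tilde\alpha^t_i(\sigma^t)\mid\mc F_{t-1}]=\hat\alpha^t_i(\sigma^t)$. Thus $\sum_t[\tilde\alpha^t_i(\sigma^t)-\hat\alpha^t_i(\sigma^t)]$ is a sum of martingale differences, each of magnitude $O(M)$ by the norm bound on the estimators, and Azuma--Hoeffding bounds it by $O(M\sqrt{T\log(1/\delta)})$ with probability $1-\delta$.

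The first sum is the crux, because its maximizer $\sigma_i^\star$ depends on the full trajectory --- in particular on $(\sigma^t_{-i})_{t\le T}$ --- and is therefore \emph{not} $\mc F_{t-1}$-measurable, so Azuma cannot be applied to it as written. The key point is that, against a fixed opponent profile, both $\sum_t\hat\beta^t_i(\sigma_i,\sigma^t_{-i})$ and $\sum_t\tilde\beta^t_i(\sigma_i,\sigma^t_{-i})$ are \emph{linear} in the sequence form of $\sigma_i$: the MCCFR bound estimate is exactly an inner product of a (random) counterfactual-value vector --- whose conditional expectation given $\mc F_{t-1}$ is the true counterfactual-value vector --- with the sequence-form strategy (\citet{Farina20:Stochastic}). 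Hence $\Xi_{i,T}(\sigma_i):=\sum_t[\hat\beta^t_i(\sigma_i,\sigma^t_{-i})-\tilde\beta^t_i(\sigma_i,\sigma^t_{-i})]$ is linear in $\sigma_i$, so its maximum over the sequence-form polytope is attained at a vertex, i.e.\ a deterministic strategy of player $i$ in $\hat G^T$; there are at most $b^{N_T}$ of these, where $b$ bounds the branching factor, a game-dependent count that for a fixed game does not grow with $T$ (since $N_T\le N$). For each fixed vertex $\sigma_i$, $\Xi_{i,T}(\sigma_i)$ is once more a sum of $O(M)$-bounded martingale differences with zero conditional mean, so Azuma--Hoeffding controls it at scale $O(M\sqrt{T\log(1/\delta)})$; union-bounding over these at most $b^{N_T}$ vertices, over the $n$ players, and over the single $\alpha$-term per player, and equating the total failure probability to $1/T$, only adds $\log(n(b^{N_T}+1)T)=O(N_T+\log n+\log T)$ inside the square root, which for a fixed game is $O(\log T)$ with a game-dependent constant.

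Putting the pieces together, with probability at least $1-1/T$ the displayed left-hand side is at most $O(M\sqrt{T\log T})/T=O(M\sqrt{\log T/T})$, and choosing the constant in the $M\sqrt{\log T/T}$ term of $\tilde\eps_{i,T}$ larger than the Azuma constant obtained above (plus the $\bar\eps_{i,T}$-slack constant divided by $\sqrt{\log 2}$, using $M\ge1$ and $T\ge2$) gives $\bar\eps_{i,T}-\tilde\eps_{i,T}\le0$ for all $i$ simultaneously. The one genuinely delicate step is the treatment of $\Xi_{i,T}$: converting a trajectory-dependent, non-predictable maximizer into an object amenable to concentration. This succeeds precisely because the MCCFR bound estimators are linear in the estimating player's own strategy, so the maximization collapses onto the finitely many vertices of the sequence-form polytope, after which only routine martingale tail bounds and a union bound remain.
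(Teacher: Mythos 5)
Your proof is correct in substance, but it takes a genuinely different route from the paper: the paper's entire proof is a deferral --- ``identical to Theorem~1 of \citet{Farina20:Stochastic}'' --- which imports a single Azuma--Hoeffding bound on the martingale differences $\langle \ell^t - \tilde\ell^t, x^t - \hat x\rangle$ at scale $O(M\sqrt{T\log(1/\delta)})$, whereas you reconstruct the concentration argument from scratch and, in doing so, confront explicitly the one genuinely delicate point: the comparator $\sigma_i^\star$ achieving the max is trajectory-dependent, hence not predictable, so Azuma cannot be applied to it directly. Your resolution --- linearity of both $\hat\beta^t_i(\cdot,\sigma^t_{-i})$ and its estimator in the sequence form, so the max collapses onto the vertices of the sequence-form polytope, followed by a union bound over those vertices --- is sound (modulo taking the union bound over the vertices of the \emph{full} game's polytope, a deterministic superset, since $\hat G^T$ is itself random). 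What each approach buys: the citation gives the clean radius $O(M\sqrt{T\log T})$ with no dependence on the number of pure strategies, while your union bound inflates the log term to $O(N_T\log b + \log T)$, i.e., an extra $\sqrt{N_T}$ factor that you absorb into a ``game-dependent constant.'' This is defensible given that the paper explicitly declines to track constants, but it is a real quantitative weakening: tracked in terms of game size, your version of $\tilde\eps_{i,T}$ scales as $\tilde O(N_T^{3/2}/\sqrt{T})$ rather than the $\tilde O(N_T/\sqrt{T})$ the paper needs for its subsequent corollary and for the remark that $\tilde\eps_{i,T}$ becomes nontrivial at $T = \Omega(N_T^2)$. Two smaller bookkeeping points you should tighten if writing this up: (i) $\hat\alpha^t,\hat\beta^t$ are determined only after iteration $t$'s playthrough, so the martingale-difference property holds with respect to an enlarged filtration that includes the pseudogame update but not the estimator's sampling randomness (the tower property then recovers your statement); and (ii) the paper's $M$ bounds the \emph{variation} of the estimator over strategies, not its absolute value, so the bounded-differences condition for your separated $\alpha$- and $\beta$-sums needs the (true, for outcome-sampling estimators that vanish off the sampled trajectory) absolute bound rather than the diameter bound --- the cited proof sidesteps this by concentrating the combined quantity $\langle\ell^t-\tilde\ell^t, x^t-\hat x\rangle$.
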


\footnotetext{The MCCFR convergence plots start at $10^3$ iterations because computing the best response requires a relatively expensive pass through the whole game tree, so we only perform it once every $10^3$ iterations}

Thus, in particular, we have:
\begin{corollary}
$
\eps^*_{i,T} := \min(\eps_{i, T}, \tilde \eps_{i, T}) = \tilde O(N_T/\sqrt{T})
$ is an equilibrium gap bound with probability $1 - O(1/\sqrt{T})$.
\end{corollary}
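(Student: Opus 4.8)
The plan is to treat this as a bookkeeping corollary of the three correctness statements \Cref{prop:correct-rm}, \Cref{prop:eps-bar-correctness}, and \Cref{prop:eps-tilde-correctness}, together with one genuinely new quantitative estimate on the \emph{size} of $\tilde\eps_{i,T}$, obtained from \Cref{thm:main} plus the (stochastic) regret guarantee of the (MC)CFR instance driving \Cref{alg:rm}. There are really two things to prove: that $\eps^*_{i,T}$ is a valid equilibrium gap bound with probability $1-O(1/\sqrt T)$, and that $\eps^*_{i,T}=\tilde O(N_T/\sqrt T)$ on that same event.

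\textbf{Validity.} I would fix $T$ and $i$ and consider three events: $E_1$, that the confidence sequence is correct at time $T$ (so $\Pr[E_1]\ge 1-2/T^2$ by \Cref{thm:ucb}); $E_2$, that $\bar\eps_{i,T}$ is an actual equilibrium gap bound for $\bar\sigma^T$ in $G^T$ (so $\Pr[E_2]\ge 1-O(1/\sqrt T)$ by \Cref{prop:eps-bar-correctness}); and $E_3$, that $\tilde\eps_{i,T}\ge\bar\eps_{i,T}$ (so $\Pr[E_3]\ge 1-1/T$ by \Cref{prop:eps-tilde-correctness}). On $E_1$, \Cref{prop:correct-rm} gives that the equilibrium gap of $\bar\sigma^T$ is at most $\eps_{i,T}$; on $E_2\cap E_3$ it is at most $\bar\eps_{i,T}\le\tilde\eps_{i,T}$. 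Hence on $E_1\cap E_2\cap E_3$ the gap is at most $\min(\eps_{i,T},\tilde\eps_{i,T})=\eps^*_{i,T}$, i.e.\ $\eps^*_{i,T}$ is a valid bound; note that taking the minimum is safe precisely because on this event \emph{both} $\eps_{i,T}$ and $\tilde\eps_{i,T}$ are valid upper bounds, so their minimum is too. A union bound gives $\Pr[E_1\cap E_2\cap E_3]\ge 1-O(1/\sqrt T)$ (and if one wants it simultaneously over $i$ and a range of $T$, the tail probabilities sum in the usual way).

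\textbf{Rate.} Since $\eps^*_{i,T}\le\tilde\eps_{i,T}$ it suffices to bound $\tilde\eps_{i,T}$. The explicit additive term is $O(M\sqrt{\log T/T})$ with $M\le N_T$ (uniform sampling), which is already $\tilde O(N_T/\sqrt T)$. For the time-average term, the key move is to not push the expectation inside the max but to route through the hindsight-optimal comparator of the regret minimizer: writing $\max_{\sigma_i}\sum_{t=1}^T\tilde\beta^t_i(\sigma_i,\sigma^t_{-i})=\tilde R_{i,T}+\sum_{t=1}^T\tilde\beta^t_i(\sigma^t_i,\sigma^t_{-i})$, where $\tilde R_{i,T}$ is exactly the regret incurred by $\mc A^i$ against the stochastic losses $-\tilde\beta^t_i(\cdot,\sigma^t_{-i})$ it is actually fed in \Cref{alg:rm}. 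For outcome-sampling MCCFR with a uniform sampling scheme this is $\tilde O(N_T\sqrt T)$ with high probability by \cite{Farina20:Stochastic}, and for deterministic CFR it is $\tilde O(N_T\sqrt T)$ outright. What remains is $\tfrac1T\sum_t[\tilde\beta^t_i(\sigma^t)-\tilde\alpha^t_i(\sigma^t)]=\tilde U_{i,T}/T$, the stochastic analogue of $U_{i,T}/T$; since the estimates have the correct conditional expectation, $\E\tilde U_{i,T}=\E U_{i,T}$, which is $\tilde O(N_T\sqrt T)$ by \Cref{thm:main} applied to the (possibly randomized) exploration policy of \Cref{alg:rm}, and for a high-probability version one adds an Azuma--Hoeffding bound on the martingale with increments of order $N_T$, costing only another $\tilde O(N_T/\sqrt T)$. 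Dividing by $T$ and summing gives $\tilde\eps_{i,T}=\tilde O(N_T/\sqrt T)$, hence $\eps^*_{i,T}=\tilde O(N_T/\sqrt T)$, on an event of probability $1-O(1/\sqrt T)$ once the extra concentration failure probabilities are absorbed into the union bound above.

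\textbf{Main obstacle.} The validity half is a routine union bound; the work is in the rate half, and the one delicate point is upper-bounding $\E\max_{\sigma_i}\sum_t\tilde\beta^t_i(\sigma_i,\sigma^t_{-i})$ --- Jensen goes the wrong way, so the argument must exploit that the losses fed to $\mc A^i$ are exactly these stochastic upper bounds, after which the stochastic-regret bound of \cite{Farina20:Stochastic} and \Cref{thm:main} (crucially, the latter holds for an \emph{arbitrary} exploration policy, including the MCCFR one) close the gap. A secondary chore is verifying that every $\tilde O(\cdot)$ really hides only polylogarithmic factors: the $\log T$ inside $M$, the $\log T$ from Azuma, and the $\Theta(\log T)$ growth of the constant $M$ in \Cref{thm:main} must all be genuinely polylogarithmic for the final bound to read $\tilde O(N_T/\sqrt T)$ as stated.
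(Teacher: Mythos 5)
Your proposal is correct and takes essentially the same route as the paper, which presents this corollary as an immediate consequence of \Cref{prop:correct-rm}, \Cref{prop:eps-bar-correctness}, and \Cref{prop:eps-tilde-correctness} (union-bounded for validity) together with the regret decomposition of $\tilde\eps_{i,T}$, the stochastic regret bound of \citet{Farina20:Stochastic}, and the $\tilde O(N_T\sqrt T)$ uncertainty bound of \Cref{thm:main} for the rate. Your write-up merely fills in details the paper leaves implicit (e.g.\ the expectation-to-high-probability conversion), without changing the argument.
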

This is the desired result. In practice, $\tilde \eps_{i,T}$ is trivial until $T = \Omega(N_T^2)$, and $\eps_{i, T}$ is almost always a better bound. Thus, in our experiments, we use only $\eps_{i, T}$. For this reason and for clarity, we have not bothered to specify the constants in the big-$O$s. Nevertheless, it is desirable theoretically to be able to define a quantity $\eps^*_{i, T}$ that has both $\tilde O(N_T/\sqrt{T})$ convergence and (high-probability) correctness. As before, the correctness probability can be raised to any inverse-polynomial function of $T$ by a suitable change to \Cref{eq:rho}.

As an equilibrium-finding algorithm, \Cref{alg:rm} is a ``weaker'' version of just running the underlying regret minimizers on the full game: 
instead of each regret minimizer getting access to the true losses, they only get access to an upper bound. However, its main advantage over regret minimization is, as before, its ability to give a equilibrium gap bound that can be computed without full knowledge of the remainder of the game or exact nature action probabilities.

Finally, \Cref{alg:rm} has an unintuitive property.
\begin{warning}\label{rem:warning}
If $\mc A_i$ are stochastic regret minimizers (e.g. MCCFR), instead of submitting $-\hat \beta^t_i(\cdot, \sigma^t_{-i})$, it may be tempting to submit a noisy (sampled) version of $-\beta^t_i(\cdot, \sigma^t_{-i})$. %
Then the {\em actual} equilibrium gap $\beta^{*t}_i(\bar \sigma_{-i}^t) - \alpha^t_i(\bar \sigma^t)$ will converge, but the {\em provable} equilibrium gap $\bar \eps_{i,t}$ may not. A counterexample is provided in the appendix.
\end{warning}

\subsection{The Case of Known Nature Probabilities: MCCFR Without Uniform Sampling}
If the nature probabilities are assumed to be known exactly, \Cref{rem:warning} does not apply, since the actual bounds $(\alpha^t, \beta^t)$ and the sampled bounds $(\hat \alpha^t, \hat \beta^t)$ are the same. Even in this case, \Cref{alg:rm} is still noteworthy: if we run it with outcome-sampling MCCFR, the result is an MCCFR-like algorithm ({\em i.e.,} an equilibrium finder in the black-box case) that operates without an a-priori ``uniform sampling strategy''. Indeed, the iterations only require a uniform sampling strategy over the {\em current pseudogame}, not the full game!

That algorithm is not quite a regret minimizer in the usual sense: its convergence rate depends on the uncertainty of the sampling method, and is tied to the fact that the sampling in Line~\ref{line:samp} of \Cref{alg:rm} uses the current strategy.

\section{Experiments}\label{s:experiments}

We conducted experiments on two common benchmarks:
\begin{enumerate}
    \item {\em $k$-rank Goofspiel}. At each time $t = 1, \dots, k$, both players simultaneously place a {\em bid} for a prize. The prizes have values $1, \dots, k$, and are randomly shuffled. The valid bids are also $1, \dots, k$, each of which must be used exactly once during the game. The higher bid wins the prize; in case of a tie, the prize is split. The winner of each round is made public, but the bids are not. Our experiments use $k = 4$.
    \item {\em $k$-rank heads-up limit Leduc poker} \citep{Southey05:Bayes}, a small two-player variant of poker played with one hole card per player and one community card. Our experiments use a full range of poker ranks $(k = 13)$.
\end{enumerate}
We tested four algorithm variants. Except in the last case, which we will describe, all certificate-finding algorithms assume that the nature distributions are independent of player actions. In Goofspiel, we assume further that the nature distributions are independent of past {\em nature} actions, which is true (nature always plays uniformly at random).
\begin{enumerate}
\item MCCFR with outcome sampling (OS-MCCFR)~\cite{Lanctot09:Monte} ({\tt MCCFR}). This algorithm requires the game tree to be fully expanded, and does not give a (nontrivial) certificate. However, it does give a benchmark for actual equilibrium gap convergence.
\item \Cref{alg:rm} with OS-MCCFR as the regret minimizer ({\tt Cert-MCCFR}).
\item \Cref{alg:zerosum}, with LP for the game solves ({\tt Cert-LP}). Since the LP solves are relatively expensive, we only recompute the LP solution every $100$ playthroughs sampled. This does not change the asymptotic performance of the algorithm. We use Gurobi v9.0.0 \cite{19:Gurobi} as the LP solver.
\item \Cref{alg:zerosum}, except with no assumptions on relationships between nature distributions ({\tt Cert-LP-Indep}).
\end{enumerate}
\Cref{fig:experiments} shows the results. As expected, all the algorithms show a long-term convergence rate of roughly $\tilde \Theta(1/\sqrt{t})$. All certificate-finding algorithms find nontrivial provable certificates with fewer samples than it would take to expand the whole game tree, showing the efficacy of our method.

\section{Conclusion and Future Research}
We developed algorithms that construct high-probability certificates in games with only black-box access. Our method can be used with either an exact game solver (e.g., LP solver) as a subroutine or a regret minimizer such as MCCFR. \Cref{tab:summary} shows which algorithm we recommend based on the use case. As a side effect, we developed an MCCFR-like model-free equilibrium-finding algorithm that converges at rate $O(\sqrt{\log(t)/t})$, and does not require a lower-bounded sampling vector. We are, to our knowledge, the first to obtain this result.
Our experiments show that our algorithms produce nontrivial certificates with very few samples. 

This work opens many avenues for future research.
\begin{enumerate}
    \item Is there a ``cleaner'' way to fix the problem introduced in \Cref{s:convergence-problem}? For example, a different confidence sequence may fix the problem, or it could be the case that $\eps_{i,T}$ is small for {\em most} times $t$ (or even only a constant fraction), which would show that $\min_{t \le T} \eps_{i, T} = \tilde O(1/\sqrt{T})$, matching \Cref{thm:main-lp}.
    \item Is it possible to adapt \Cref{alg:rm} to work with a generic extensive-form iterative game solver, for example, first-order methods such as EGT \cite{Hoda10:Smoothing,Kroer18:Solving}?
    \item In many practical games, there are nature nodes $h$ for which, under a particular profile $\sigma$, every child of $h$ has similar utility: the range of utilities of the children of $h$ under $\sigma$ is far smaller than $[\alpha(h \to *), \beta(h \to *)]$. Is it possible to incorporate this sort of information into the confidence-sequence pseudogames without losing perfect recall (which is needed for efficient solving)?
\end{enumerate}

\newpage
\section*{Acknowledgements}

This material is based on work supported by the National Science Foundation under grants IIS-1718457, IIS-1901403, and CCF-1733556, and the ARO under award W911NF2010081.

\bibliography{dairefs,main}

\clearpage
\onecolumn
\appendix
\section{Proofs of Theorems}
\subsection{\Cref{thm:ucb}}
\begin{lemma}\label{lem:hoeffding}
	Fix a player $i$ and chance node $h$. With probability at least $1 - 2/t^2Cn$, for any assignment $u : \op{Children}(h) \to [\alpha, \beta]$ of utilities, we have \begin{align}
	\abs{\E_{a \sim \sigma_0|h} u(ha) - \E_{a \sim \hat \sigma_0|h} u(ha)} \le  (\beta - \alpha)\rho(h).
	\end{align}
\end{lemma}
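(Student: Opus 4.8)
The plan is to prove \Cref{lem:hoeffding} by reducing the worst case over utility assignments to a finite union and then applying a Hoeffding bound to each piece. Write $p := \sigma_0(\cdot \mid h)$ for the true action distribution at $h$ and $q := \hat\sigma_0(\cdot \mid h)$ for the empirical one, so that the quantity to be controlled is $\langle p - q, u\rangle$, a \emph{linear} function of the vector $u \in \R^{\op{Children}(h)}$. Since $p$ and $q$ are both probability vectors, $\langle p - q, \mathbf 1\rangle = 0$, so translating $u$ by $\alpha \mathbf 1$ leaves $\langle p - q, u\rangle$ unchanged; hence we may assume $u \in [0, \beta - \alpha]^{\op{Children}(h)}$. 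A linear function on a box attains its extrema at vertices, and the vertices of this box are exactly $(\beta - \alpha)\mathbf 1_S$ for subsets $S \subseteq \op{Children}(h)$, so $\sup_u |\langle p - q, u\rangle| = (\beta - \alpha) \max_{S} |p(S) - q(S)|$, where $p(S) := \sum_{a \in S} p(a)$ and likewise for $q$. Therefore the event that the conclusion of the lemma fails is contained in $\bigcup_{S \subseteq \op{Children}(h)} \{\, |p(S) - q(S)| > \rho(h)\,\}$, a union over $2^{|A_h|}$ events.

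Next, fix a subset $S$. The quantity $q(S)$ is the empirical frequency, over the $t_h$ i.i.d.\ draws of nature's action at $h$, of the event ``the drawn action lies in $S$'', and $p(S) \in [0,1]$ is its true probability. By Hoeffding's inequality, $\Pr[\,|p(S) - q(S)| > \rho(h)\,] \le 2\exp(-2 t_h \rho(h)^2)$. Plugging in the definition of $\rho(h)$ from \Cref{eq:rho}, we have $2 t_h \rho(h)^2 = |A_h|\log 2 + \log(t^2 C n)$, so this tail bound equals $2 \cdot 2^{-|A_h|} / (t^2 C n)$. A union bound over the $2^{|A_h|}$ subsets $S$ then gives total failure probability at most $2^{|A_h|} \cdot 2 \cdot 2^{-|A_h|}/(t^2 C n) = 2/(t^2 C n)$, which is precisely the claim: the $|A_h|\log 2$ term built into $\rho(h)$ is exactly what pays for the $2^{|A_h|}$-way union over vertices of the utility box.

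The step needing the most care is the Hoeffding application, because $t_h$ is not a fixed integer but a data-dependent count, and the exploration policy $\mc A$ may adaptively decide whether to revisit $h$ based on what it observed \emph{below} $h$ on earlier playthroughs --- so while the successive actions drawn at $h$ are i.i.d.\ from $p$, the number of them is effectively a stopping time correlated with those draws. To handle this rigorously I would let $X_1, X_2, \dots$ denote the (well-defined, i.i.d.) sequence of nature actions at the successive visits to $h$, form for each $S$ the martingale with increments $\mathbf 1[X_j \in S] - p(S)$, and invoke a time-uniform (maximal / peeling) concentration inequality so that the deviation bound holds simultaneously for all $m = 1, \dots, t$; since $t_h \le t$ always, this controls the realized $q(S)$. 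The extra cost of the peeling is an additional $O(\log t)$ inside the square root, which is of the same order as the $\log t^2$ already present in $\rho(h)$ and is therefore absorbed by the paper's remark that the polynomial in \Cref{eq:rho} may be enlarged at will; alternatively one can crudely union-bound over the $\le t$ possible values of $t_h$ and fold the factor of $t$ into that polynomial. Everything else is routine.
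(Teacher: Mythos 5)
Your proof is essentially the paper's: reduce the supremum over utility assignments to the $2^{|A_h|}$ vertex assignments (the paper does this via the same convexity-of-the-error argument), apply Hoeffding with the given $\rho(h)$ so that each vertex fails with probability $2\cdot 2^{-|A_h|}/(t^2 C n)$, and union-bound over the vertices to get $2/(t^2 C n)$, with identical constants. Your closing paragraph about $t_h$ being a data-dependent visit count is a legitimate subtlety the paper's proof silently glosses over (it invokes Hoeffding as though $t_h$ were fixed), and your remedy---a time-uniform martingale bound or a union over the at most $t$ possible values of $t_h$, with the extra logarithmic cost absorbed by enlarging the polynomial inside $\rho$---is a sound way to make that step rigorous.
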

\begin{proof}
	If $\rho = 1$ the claim is trivial, so assume $\rho < 1$. The desired error term is a convex function of $u$, so we need only prove the theorem for $u : \op{Children}(h) \to \{ \alpha, \beta\}$. By definition, $\hat \sigma_0|h$ was created by sampling $t(h)$ times. Thus, by Hoeffding, we have
	\begin{align}
	\Pr[\abs{\E_{a \sim \sigma_0|h} u(ha) - \E_{a \sim \hat \sigma_0|h} u(ha)} \ge (\beta - \alpha)\rho]
	&\le 2 \exp(-2t(h)\rho(h)^2)
	\\&= 2 \exp(- \abs{A_h} \log 2  - \log t^2Cn )
	\\&= \frac{2^{1-\abs{A(h)}}}{t^2Cn}
	\end{align}
	Taking a union bound over the $2^{|A_h|}$ choices of $u$ completes the proof.
\end{proof}
Thus, by a union bound, with probability $1 - 2/t^2$, the above lemma is true for every player and chance node. Condition on this event, and take any player $i$ and any profile $\sigma$. For notation, let $\hat \sigma$ be the strategy profile in which chance plays according to $\hat \sigma_0$ and the players play according to $\sigma$.
\begin{lemma}
	At every node $h$, we have the bounds $\hat \alpha_i(\sigma|h) \le \alpha_i(\sigma|h) \le \beta_i(\sigma|h) \le \hat \beta_i(\sigma|h)$.
\end{lemma}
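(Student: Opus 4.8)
The natural route is a bottom-up (backward) induction on the tree of $\hat G^t$. Condition on the event of \Cref{lem:hoeffding} --- which, by the union bound taken just above, holds simultaneously for every player and every chance node with probability at least $1 - 2/t^2$ --- and fix the player $i$ and the profile $\sigma$ (equivalently $\hat\sigma$, i.e.\ $\sigma$ for the players and $\hat\sigma_0$ for chance, since inside $\hat G^t$ chance's distribution is the empirical one). I would prove by structural induction that $\hat\alpha_i(\sigma|h) \le \alpha_i(\sigma|h) \le \beta_i(\sigma|h) \le \hat\beta_i(\sigma|h)$ holds at \emph{every} node $h$ of $\hat G^t$; evaluating this at the root $h=\emptyset$ and composing with the $1-2/t^2$ bound then yields \Cref{thm:ucb}. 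Note that ``expanding'' a node adds \emph{all} of its children, so every non-leaf node of $\hat G^t$ has its full set of $G$-children present, and the future-value recursion $v(\sigma|h) = u_i(h) + \sum_{a \in A(h)} \sigma(a|h)\, v(\sigma|ha)$ is the standard one; partial presence of information sets in $\hat G^t$ is irrelevant to these purely value-theoretic inequalities.

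For the base case, a leaf $h$ of $\hat G^t$ is either a true terminal of $G$ or a collapsed internal node, and in either case carries the black-box bounds $[\alpha_i(h\to*),\beta_i(h\to*)]$; the black-box guarantee $\alpha_i(h\to*) \le \min_{z\succeq h} u_i(h\to z) \le \max_{z\succeq h} u_i(h\to z) \le \beta_i(h\to*)$ gives the chain for every $\sigma$ (this is exactly the trunk condition of \Cref{def:trunk}). For the inductive step at a \emph{non-chance} node $h$ I would expand $\hat\alpha_i(\sigma|h) = u_i(h) + \sum_a \sigma(a|h)\,\hat\alpha_i(\sigma|ha)$, apply the induction hypothesis to each child, and observe that adding the common term $u_i(h)$ and taking a convex combination preserves each of the three inequalities; the bound for $\hat\beta_i$ is symmetric, and the middle inequality $\alpha_i(\sigma|h)\le\beta_i(\sigma|h)$ propagates in the same way.

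The crux is the inductive step at a \emph{chance} node $h$. By \Cref{def:confidence-sequence}, $\hat\beta_i(\sigma|h) = u_i(h) + \rho(h)\Delta_i(h\to*) + \E_{a\sim\hat\sigma_0(\cdot|h)} \hat\beta_i(\sigma|ha)$, whereas $\beta_i(\sigma|h) = u_i(h) + \E_{a\sim\sigma_0(\cdot|h)} \beta_i(\sigma|ha)$. Using the induction hypothesis $\beta_i(\sigma|ha) \le \hat\beta_i(\sigma|ha)$ child by child, it suffices to bound the empirical-to-true expectation gap of the map $a \mapsto \hat\beta_i(\sigma|ha)$ by $\rho(h)\Delta_i(h\to*)$ --- which is precisely the conclusion of \Cref{lem:hoeffding} applied to the utility assignment $u := \hat\beta_i(\sigma|\cdot)$ on $\op{Children}(h)$, and $\rho(h)$ in \Cref{eq:rho} is calibrated for exactly this. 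The one place I expect to need care is checking the hypothesis of \Cref{lem:hoeffding}, namely that $\hat\beta_i(\sigma|ha) \in [\alpha_i(h\to*),\beta_i(h\to*)]$ for each child; I would secure this by carrying, as an extra invariant of the same induction, the containment $\alpha_i(h\to*) \le \hat\alpha_i(\sigma|h)$ and $\hat\beta_i(\sigma|h) \le \beta_i(h\to*)$, which is maintained automatically once the black box's subtree bounds are (without loss) taken nested along edges and the $\rho$-inflated chance-node bounds are capped at the corresponding subtree bounds --- this is the ``adjusting the bound functions as necessary'' clause of \Cref{def:confidence-sequence}. The inequality $\hat\alpha_i(\sigma|h) \le \alpha_i(\sigma|h)$ is entirely symmetric, and $\alpha_i(\sigma|h) \le \beta_i(\sigma|h)$ again follows directly from the child-wise inequality, completing the induction and hence \Cref{thm:ucb}.
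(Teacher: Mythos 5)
Your overall architecture (condition on the union-bounded event of \Cref{lem:hoeffding}, then a leaves-first induction with the trunk/black-box bounds at leaves, convexity at player nodes, and the concentration bound at chance nodes) is the same as the paper's. But the way you arrange the chance-node step introduces a genuine gap: you apply \Cref{lem:hoeffding} to the map $a \mapsto \hat\beta_i(\sigma|ha)$, i.e.\ to the \emph{hatted} (inflated) values. The lemma only covers assignments $u : \op{Children}(h) \to [\alpha(h\to*), \beta(h\to*)]$, and the hatted values need not lie in that interval, nor even in any interval of width $\Delta(h\to*)$: the $\rho(h')\Delta(h'\to*)$ bonuses accumulate along every path below $h$, and they differ across the children's subtrees, so the spread of $\{\hat\beta_i(\sigma|ha)\}_a$ can strictly exceed $\Delta(h\to*)$. (The appendix counterexample in \Cref{s:counterexample-convergence} exploits exactly this: $\hat\beta^t$ of an unsampled subtree grows without bound.) You noticed this and proposed to restore the hypothesis by capping the inflated bounds at the subtree bounds, but that is a modification of the construction in \Cref{def:confidence-sequence}, not something licensed by the ``adjusting the bound functions as necessary'' clause, which refers only to assigning bounds to newly expanded children; the paper's chance-node bounds are deliberately uncapped (their nonmonotonicity is discussed in \Cref{s:convergence-problem}), so your argument as written proves the lemma for a different algorithm.

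The repair is simply to reverse the order of the two operations, which is what the paper does: apply the induction hypothesis under the \emph{empirical} measure and apply \Cref{lem:hoeffding} to the \emph{true} pseudogame values, which do lie in $[\alpha(h\to*), \beta(h\to*)]$. Concretely, for the upper bound at a chance node $h$,
\begin{align}
\beta_i(\sigma|h) = u_i(h) + \E_{a \sim \sigma_0|h} \beta_i(\sigma|ha) \le u_i(h) + \E_{a \sim \hat\sigma_0|h} \beta_i(\sigma|ha) + \rho(h)\Delta_i(h\to*) \le u_i(h) + \E_{a \sim \hat\sigma_0|h} \hat\beta_i(\sigma|ha) + \rho(h)\Delta_i(h\to*) = \hat\beta_i(\sigma|h),
\end{align}
where the first inequality is \Cref{lem:hoeffding} applied to $a \mapsto \beta_i(\sigma|ha)$ and the second is the inductive hypothesis; the lower bound is symmetric. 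With that reordering your induction (including your base case and player-node step, which are fine) goes through without any capping and matches the paper's proof.
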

\begin{proof}
	By induction, leaves first. At the leaves, the lemma is trivial. Let $h$ be any internal node. Then we have
	\begin{align}
		\hat \alpha_i(\sigma|h) &= \E_{a \sim \hat \sigma_i|h} \hat \alpha_i(\sigma|ha) - \rho(h) \Delta_i(h \to *)
		\\&\le \E_{a \sim \hat \sigma_i|h} \alpha_i(\sigma|ha) - \rho(h) \Delta_i(h \to *)
		\\&\le \E_{a \sim \sigma_i|h} \alpha_i(\sigma|ha)
		\\&= \alpha_i(\sigma|h).
	\end{align}
where the first two inequalities use, in order, the inductive hypothesis and the last lemma. An identical proof holds for $\beta$, and we are done.
\end{proof}
The theorem now follows by applying the above lemma with $h = \emptyset$.
\subsection{\Cref{thm:main}}
Assume WLOG there is only one player, and drop the subscript $i$ accordingly. Define the {\em sampled cumulative uncertainty} $\hat U_T$ as 
\begin{align}
\hat U_T := \sum_{t=1}^T \hat \Delta^t(z_t)
\end{align}
where $z_t$ is the last node in $\hat G^t$ reached during the play at time $t$. By linearity of expectation, we have $\E \hat U_T = \E U_T$.
Define $\hat U_K^\Delta(h)$ to be the sampled regret at node $h$ after node $h$ is sampled $K$ times. Formally,
\begin{align}
\hat U_K(h) &:= \sum_{k=1}^K \hat \Delta^{t_{h,k}}(h \to z_{t_{h,k}})
\end{align}
where $t_{h,k}$ is the $k$th timestep on which $h$ was sampled. Conveniently, $\hat U_K(h)$ can be analyzed independently of the rest of the game. Our goal is to bound $\hat U_T = \hat U_T(\emptyset)$.

Let $N_k(h)$ be the number of descendants of $h$, including $h$ itself, at time $t_{h,k}$. Let $C_{k}(h)$ be the same, except only counting chance nodes. Let $\rho_k(h)$ be the value of $\rho(h)$ after $k$ samples at $h$. Once again, these quantities are independent of what happens outside the subgame rooted at $h$. We now prove a lemma, which has the theorem as the special case $h = \emptyset$.
\begin{lemma}
For every exploration policy $\mc A$, any node $h$ of $G$, and any time $K$, we have
\begin{align}
\E \hat U_K(h) \le 2 C_{k}(h)\sqrt{2KM} + N_K(h).
\end{align}
\end{lemma}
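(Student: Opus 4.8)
The plan is to prove the lemma by structural induction on the finite subtree of the final pseudogame $\hat G^T$ rooted at $h$, treating a node only after all of its children; the case $h = \emptyset$, $K = T$ then gives \Cref{thm:main}, modulo the identity $\E\hat U_T = \E U_T$ already noted above. The base case is $h$ with no children (a true terminal), where $\hat U_K(h) \le 1 \le N_K(h)$ and $C_K(h) = 0$, so the bound is immediate. For the inductive step, observe that once $h$ has been sampled it is an \emph{internal} node of the pseudogame on every subsequent visit, so on the $k$-th visit we may peel it off the sampled path: writing $c_k$ for the action taken at $h$ there,
\[
\hat\Delta^{t_{h,k}}(h \to z_{t_{h,k}}) = \hat\Delta^{t_{h,k}}_{\mathrm{loc}}(h) + \hat\Delta^{t_{h,k}}(h c_k \to z_{t_{h,k}}),
\]
where the local width $\hat\Delta^{t_{h,k}}_{\mathrm{loc}}(h)$ equals $0$ if $h$ is a decision node, equals $2\rho(h)\Delta(h\to*)$ if $h$ is a chance node, and equals $\Delta(h\to*)\le 1$ on the at most one visit on which $h$ is still a frontier leaf (the visit on which it is expanded).

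First I would bound the sum of local widths over the $K$ visits. It is at most $\mathbf 1[h\text{ is ever a frontier leaf}] + \mathbf 1[h\text{ is chance}]\sum_{k=1}^K 2\rho_k(h)\Delta(h\to*)$, where $\rho_k(h)$ is the value of $\rho(h)$ at the $k$-th visit to $h$. Since the sample count there is $k$, the timestep is at most $T$, and the number of chance nodes is at most $C_T$, monotonicity of the quantities defining $M$ gives $\rho_k(h) \le \sqrt{M/(2k)}$; together with $\Delta(h\to*)\le 1$ (the reward bound) and $\sum_{k=1}^K k^{-1/2} \le 2\sqrt K$, this yields $\sum_{k=1}^K 2\rho_k(h)\Delta(h\to*) \le 2\sqrt{2MK}$.

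For the continuation terms, the key combinatorial point is that the visits to $h$ (among its first $K$) that proceed through a fixed child $c$ are, in order, exactly the first $K_c$ visits to $c$, where $K_c$ is their number and $\sum_c K_c \le K$; hence that block of continuation terms sums to $\hat U_{K_c}(c)$. Applying the induction hypothesis to each child and using the elementary inequality $\sum_c a_c\sqrt{b_c} \le \bigl(\sum_c a_c\bigr)\sqrt{\sum_c b_c}$ for $a_c, b_c \ge 0$, together with the disjointness facts $\sum_c C_{K_c}(c) \le C_K(h) - \mathbf 1[h\text{ is chance}]$ and $\sum_c N_{K_c}(c) \le N_K(h) - 1$ (the children's subtrees are disjoint, lie inside the subtree of $h$ at the later time $t_{h,K}$, and omit $h$ itself), the total continuation contribution is at most $2\bigl(C_K(h) - \mathbf 1[h\text{ is chance}]\bigr)\sqrt{2MK} + (N_K(h)-1)$. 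Adding the local bound, the $\mathbf 1[h\text{ is chance}]\sqrt{2MK}$ term exactly accounts for the remaining chance node ($h$ itself), and the two leftover $+1$'s are absorbed into $N_K(h)$; this gives $\hat U_K(h) \le 2C_K(h)\sqrt{2MK} + N_K(h)$. This argument establishes the bound surely once the run is fixed (so that $M$, $C_K(h)$, $N_K(h)$ are constants), from which the stated inequality is immediate; in particular the induction hypothesis may be invoked pathwise, which is why the random split $K = \sum_c K_c$ causes no difficulty.

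I expect the main obstacle to be getting the bookkeeping of the two previous paragraphs exactly right: verifying the child-visit alignment that lets the continuation terms regroup into $\hat U_{K_c}(c)$, correctly tracking the chance-node count and the node count so the per-node $\sqrt{2MK}$ and $\le 1$ frontier contributions telescope precisely into $2C_K(h)\sqrt{2MK}$ and $N_K(h)$, and confirming that a node can be a frontier leaf on at most one of its visits.
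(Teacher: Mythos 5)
Your proposal is correct and follows essentially the same route as the paper's proof: a leaves-first induction that splits the uncertainty at $h$ into an at-most-$1$ expansion term, a chance-node term bounded via $\rho_k(h) \le \sqrt{M/(2k)}$ and $\sum_{k\le K} k^{-1/2} \le 2\sqrt{K}$, and the children's recursive contributions controlled by the disjointness counts $\sum_a C_{K_a}(ha) \le C_K(h) - \mathbf 1[h\text{ chance}]$ and $\sum_a N_{K_a}(ha) \le N_K(h) - 1$. Your version merely makes explicit some bookkeeping (the alignment of child visits with the first $K_a$ visits to each child, and the pathwise application of the induction hypothesis) that the paper leaves implicit.
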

\begin{proof}
By induction on the nodes of the game tree, leaves first. For each child $ha$ of $h$, let $K_a$ be the number of times action $a$ has been sampled.

{\em Base case.} If $h$ is a leaf of $G$, then uncertainty at most $1$ will be incurred when the leaf is expanded for the first time.

{\em Inductive case.}
\begin{align}
\E \hat U_K(h) &\le \Delta(h \to *) \qty(1 + 2 \sum_{k=1}^K \rho_k(h)) + \sum_{a \in A_h} \E \hat U_{K_a}(ha)
\\&\le 1 + 2 \sum_{k=1}^K  \sqrt{\frac{M}{2k}} + \sum_{a \in A_h} \qty[C_{K_a}(ha)\sqrt{2K_aM} +  N_{K_a}(ha).]
\\&\le 2C_K(h) \sqrt{2KM} + N_K(h)
\end{align}
where the three terms come from:
\begin{enumerate}
\item a regret of at most $1$, incurred when $h$ is first expanded,
\item the regret incurred at $h$ itself, if it is a chance node, and
\item the regret incurred at each child node. \qedhere
\end{enumerate}
\end{proof}

Once again, the theorem is the above lemma applied with $h = \emptyset$.

\subsection{\Cref{prop:cert-lp-eqm} and \Cref{prop:correct-rm}}
Follow immediately from \Cref{thm:ucb}.
\subsection{\Cref{prop:uncertainty-nash-gap}}

Follows immediately from the definition of a pseudogame.
\subsection{\Cref{prop:eps-bar-correctness}}

Taking a union bound over times $t \ge \sqrt{T}$ in \Cref{thm:ucb}, we have that, with probability $1 - O(1/\sqrt{T})$, $\hat \beta^t_i(\sigma_i, \sigma^t_{-i}) - \hat \alpha^t_i(\sigma^t) \ge \beta^t(\sigma_i, \sigma^t_{-i}) - \alpha^t(\sigma^t)$ for all $t \ge \sqrt{T}$. The bound follows.

\subsection{\Cref{prop:eps-tilde-correctness}}

Identical to Theorem~1 of \citet{Farina20:Stochastic}.

\section{Counterexamples}
\subsection{Rate of convergence of the upper bound in \Cref{prop:correct-rm}}\label{s:counterexample-convergence}

Consider the following multi-armed bandit instance with two arms, formulated as a one-player game: the left arm gives loss $-K$ with probability $1/K$, and $0$ with probability $1 - 1/K$. The right arm gives loss $-1$ deterministically.

With probability $\Theta(1/K)$, the first $\Theta(K^2)+1$ samples of the left arm give rewards exactly $(-K, -K, \dots, -K, 0)$. Condition on this event. 

After $\Theta(K^2)$ samples of the left arm, its upper bound will be 
\begin{align}
-K + \Theta\qty(K \sqrt{\frac{1}{K^2} \log T}) = -K + \Theta(\sqrt{\log T})
\end{align}
The $\Theta(K^2)+1$st sample will not happen until the upper bound exceeds at least $-1$, which only happens once $T > \exp(\Theta(K^2))$. Upon taking the $\Theta(K^2)+1$st sample, the upper bound on the left arm's utility will increase by $\Theta(1)$. But the reward range of this game is $[-K, 0]$, so now taking any $K = o(\sqrt{T})$ completes the counterexample.

\subsection{\Cref{rem:warning}}\label{s:counterexample-convergence2}
For example, consider the one-player multi-armed bandit case with two arms of differing utilities $u(L) < u(R)$. Then the following two statements are simultaneously true:
\begin{enumerate}
\item With MCCFR, with probability $1$, there will exist some time $T$ after which $L$ will no longer be played ever again.
\item $\hat \beta^t(L)$ will increase without bound if it is not played.
\end{enumerate}
Thus, eventually, we will have $\hat \beta^t(L) > \hat\beta^t(R)$, after which time the provable equilibrium gap will always be at least their difference.

\end{document}